\newtheorem{Lemma}{Lemma}
\newtheorem{Theorem}{Theorem}
\newtheorem{Convention}{Convention}
\newtheorem{Corollary}{Corollary}
\newtheorem{Proposition}{Proposition}
\date{}
\begin{document}

\author{M.I.Belishev\thanks {St. Petersburg Department of Steklov Mathematical
        Institute, St.Petersburg, Russia, e-mail: belishev@pdmi.ras.ru. Supported
        by the RFBR grant 20-01 627-A and Volks-Wagen Foundation.},\,
        T.Sh.Khabibullin\thanks {St.Petersburg State University, e-mail: timur19983@outlook.com.}
        }

\title{Data characterization in dynamical inverse problem for the 1d wave equation with  matrix potential}

\maketitle

\begin{abstract}
The dynamical system under consideration is
\begin{align*}
&
u_{tt}-u_{xx}+Vu=0,\qquad x>0,\,\,\,t>0;\\
& u|_{t=0}=u_t|_{t=0}=0,\,\,x\geqslant 0;\quad
u|_{x=0}=f,\,\,t\geqslant 0,
\end{align*}
where $V=V(x)$ is a matrix-valued function ({\it potential});
$f=f(t)$ is an $\mathbb R^N$-valued function of time ({\it
boundary control}); $u=u^f(x,t)$ is a {\it trajectory} (an
$\mathbb R^N$-valued function of $x$ and $t$). The input/output
map of the system is a {\it response operator} $R:f\mapsto
u^f_x(0,\cdot),\,\,\,t\geqslant0$.

The {\it inverse problem} is to determine $V$ from given $R$. To
characterize its data is to provide the necessary and sufficient
conditions on $R$ that ensure its solvability.

The procedure that solves this problem has long been known and the
characterization has been announced (Avdonin and Belishev, 1996).
However, the proof was not provided and, moreover, it turned out
that the formulation must be corrected. Our paper fills this gap.
\end{abstract}

\noindent{\bf Key words:}\,\,\,1d wave equation with matrix
potential, reachable sets, controllability, propagation of
singularities, characterization of inverse data.

\noindent{\bf MSC:}\,\,\,35R30, 46-XX, 47-XX.
\bigskip

\section{Introduction}

\subsubsection*{About paper}
The subject of this work is the characterization of data in the
dynamical inverse problem for the one-dimensional vector wave
equation on semi-axis with matrix potential. To characterize the
data is to provide the necessary and sufficient conditions
ensuring the solvability of the inverse problem.

The inverse problem under consideration is to recover the matrix
potential from dynamical data (the response operator); it has long
been solved (see \cite{Avd Bel 96, Avd Bel Rozhk 98}). This is one
of the first problems solved by the BC-method. The issue is
exhausted if one needs only a procedure that determines the
potential from the data. However, in the understanding of
specialists, the inverse problem is completely solved if, in
addition to the procedure, the data characterization is provided.
If the potential is self-adjoint, the solvability conditions are
well known and, in fact, are reduced to positive definiteness of
the so-called {\it connecting operator} (CO) of the dynamical
system with boundary control, the evolution of which is governed
by the Sturm-Liouville operator with the given potential
\cite{BBlag book,Blag_71,Blag2}. There was a conjecture that, in
the general (non-self-adjoint) case, solvability is ensured by the
isomorphism of a relevant analogue of CO, and, moreover, this
result was announced in \cite{Avd Bel 96}. However, the proof was
not given and, moreover, certain doubts arose about sufficiency of
this condition. In particular, it was unclear what properties of
the CO provide the {\it locality} of the potential, i.e., the
absence of nonlocal Volterra additives in it. The question
remained open and the main purpose of our paper is to fill this
gap in the theory of one-dimensional dynamical inverse problems.

\subsubsection*{Statement and results}

All spaces, classes of functions and matrices in the paper are
real. We denote $\Omega:=[0,\infty)$ and $\Omega^T:=[0,T]\subset
\Omega$.
\smallskip

\noindent$\bullet$\,\,\,The forward problem is an initial-boundary
value problem of the form
    \begin{equation}\label{forward problem}
    \begin{cases}
    u_{tt}-u_{xx}+V\,u=0, & x>0,\ 0<t<T\\
    u|_{t=0}=u_t|_{t=0}=0,& x\geqslant 0\\
    u|_{x=0}=f, & 0\leqslant t\leqslant T,\\
    \end{cases}
    \end{equation}
where $V\in C^1_{\rm loc}(\Omega;\mathbb M^N)$ is a (real) matrix
valued function ({\it potential}), defined on the semi-axis
$x\geqslant 0$, $T>0$ the final moment of time; $f\in
L_2(([0,T];\mathbb{R}^N)$ a {\it boundary control}; $u=u^f(x,t)$
is a solution ({\it wave}) - an $\mathbb R^N$-valued function of
variables $x$ and $t$. Due to the hyperbolicity of the problem
(\ref{forward problem}), the relation ${\rm
supp\,}u^f(\cdot,t)\subset\Omega^t$ holds for all $t$.

Let $\mathscr F^T:=L_2([0,T];\mathbb R^N)$ be the space of
controls. The waves $u^f(\cdot,t)$ are time-dependent elements of
the space $\mathscr H^T:=L_2(\Omega^T;\mathbb R^N)$. Considering
the problem  (\ref{forward problem}) as a dynamical system, we
introduce a {\it control operator} $W^T:\mathscr F^T\to\mathscr
H^T$, acting by the rule:
    \begin{equation*}
(W^Tf)(x):=u^f(x,T), \quad x\in\Omega^T\,.
    \end{equation*}

Owing to the hyperbolicity of problem (\ref{forward problem}), its
extension of the form
   \begin{equation}\label{forward problem ext}
    \begin{cases}
    u_{tt}-u_{xx}+V\,u=0, & 0<x<T,\,\, 0<t<2T-x\\
    u|_{t<x}=0\\
    u|_{x=0}=f, & 0\leqslant t\leqslant 2T\\
   \end{cases}
  \end{equation}
is a well-posed initial boundary-valued problem, with which one
associates the so-called {\it extended response operator}
 \begin{equation*}
(R^{2T}f)(t)\,:=\,u^f_x(0,t), \quad 0\leqslant t\leqslant 2T,
    \end{equation*}
acting in the space $\mathscr F^{2T}$. Like all system
(\ref{forward problem}) attributes, the operator $R^{2T}$ is
determined by the potential $V|_{\Omega^T}$ (does not depend on
the values of $V|_{x>T}$).
\smallskip

\noindent$\bullet$\,\,\,The problem
    \begin{equation}\label{forward problem flat}
    \begin{cases}
    u_{tt}-u_{xx}+V_\flat\,u=0, & x>0,\ 0<t<T\\
    u|_{t=0}=u_t|_{t=0}=0,& x\geqslant 0\\
    u_{x=0}=f, & 0\leqslant t\leqslant T\\
    \end{cases}
    \end{equation}
with the potential $V_\flat(x):=(V(x))^\flat$, where $(...)^\flat:
\mathbb M^N\to\mathbb M^N$ is the matrix transposition, is said to
be {\it dual} to problem (\ref{forward problem}). Its solution
$u=u^f_\flat(x,t)$ possesses the same properties as $u^f$; the
control operator is
  \begin{equation*}
(W^T_\flat f)(x):=u^f_\flat(x,T), \quad x\in\Omega^T\,.
  \end{equation*}

The map $C^T:\mathscr F^T\to\mathscr F^T$,
    \begin{equation*}
    C^T\,:=\,(W^T_\flat)^*W^T
    \end{equation*}
is called a {\it connecting operator}. It is expressed via the
operator $R^{2T}$ by a simple and explicit relation established in
\cite{Avd Bel 96}.
\smallskip

\noindent$\bullet$\,\,\,The inverse problem is to recover the
potential $V|_{\Omega^T}$ from the given operator $R^{2T}$. Such a
{\it local} statement was originated by A.S.Blagovestchenskii in
\cite{Blag_71}; it is relevant to the hyperbolicity of the problem
(\ref{forward problem}).

The main result of the paper is as follows. Along with problems
(\ref{forward problem}) and (\ref{forward problem flat}), we
consider a family of "shortened" problems with final moments
$t=\xi\leqslant T$, each of which has its own connecting operator
$C^\xi$, acting in the corresponding space $\mathscr F^\xi$. All
$C^\xi$ are defined by $R^{2T}$. We show that $R^{2T}$ is the
response operator of a system (\ref{forward problem}) if and only
if {\it all} operators $C^\xi$ are isomorphisms. The necessity is
known: it is established in \cite{Avd Bel 96} in course of
analysis of the forward problem. The sufficiency was announced in
the same paper, but the proof still has not been provided. Our
work fills this gap. At the same time, the mistake made in
\cite{Avd Bel 96} is corrected: the assertion that for the
solvability of the inverse problem it is enough {\it only} $C^T$
to be isomorphism, turns out to be wrong. All $C^\xi$ have to be
isomorphisms.

\section{Forward problem}

\subsubsection*{Properties of waves}
Here the known properties of the solutions to problem
(\ref{forward problem}) are listed.  They are provided or easily
extracted from the results of \cite{BBlag book,Blag2}.
 \begin{Convention}\label{Conv 1}
All time-dependent functions are extended to $t<0$ by zero.
 \end{Convention}
\smallskip

\noindent$\bullet$\,\,\,Introduce the class of smooth controls
    $$
    \mathscr M^T\,:=\,\{f\in C^2([0,T];\mathbb R^N)\,|\,\,{\rm
    supp\,}f\subset(0,T]\},
    $$
which vanish near $t=0$. For $f\in \mathscr M^T$ the problem
(\ref{forward problem}) has a unique classical solution $u^f$ and
the representation
    \begin{equation}\label{solution_of_main_problem}
    u^f(x,t)=f(t-x)+\int_x^t w(x,s)f(t-s)\,ds,\qquad x\in\Omega^T,0\leqslant t\leqslant
    T
    \end{equation}
holds with the kernel $w$ that solves the Goursat matrix problem
    \begin{equation}\label{goursat_problem}
    \begin{cases}
    w_{tt}-w_{xx}+V\,w=0, & 0<x<t<T\\
    w(0,t)=0, & 0 \leqslant t \leqslant T\\
    w(x,x)=-\frac{1}{2}\int_0^x V(s)\,ds, & x\in\Omega^T
    \end{cases}
    \end{equation}
and is $C^2$-smooth in the domain
$\{(x,t)\,|\,\,x\in\Omega^T,\,\,0\leqslant x\leqslant t\leqslant
T\}$.

For $f\in\mathscr F^T:=L_2([0,T];\mathbb R^N)$ the right-hand side
of (\ref{solution_of_main_problem}) is well defined and regarded
as a (generalized) solution to the problem (\ref{forward problem})
of the class $C([0,T];L_2(\Omega^T))$. In the subsequent, we use
the following of its properties.
\smallskip

\noindent$\bf 1.$\,\,\,The relation
 \begin{equation}\label{supp u^f}
   {\rm supp\,} u^f(\cdot,t)\,\subset \Omega^t,\qquad t\geqslant 0
    \end{equation}
holds and shows that the waves propagate in the semi-axis
$x\geqslant 0$ with the speed $1$.
\smallskip

\noindent$\bf 2.$\,\,\,For the controls $f_\tau(t):=f(t-\tau)$,
which act with the delay $\tau>0$, one has
    \begin{equation}\label{shift time}
    u^{f_\tau}(\cdot,t)\,=\,u^f(\cdot,t-\tau),\qquad t\geqslant 0
    \end{equation}
(recall the Convention \ref{Conv 1}!); as a consequence, for
smooth controls the relations
    \begin{equation}\label{diff u^f}
    u^f_t = u^{\frac{df}{dt}},\qquad u^f_{tt} =
    u^{\frac{d^2f}{dt^2}}\,\overset{\rm see\,\,(\ref{forward problem})}=u^f_{xx}-Vu^f
    \end{equation}
are valid.
\smallskip

\noindent$\bf 3.$\,\,\,As it is seen from
(\ref{solution_of_main_problem}), owing to the continuity of the
integral term, the following is valid. If the control $f$ is
piecewise continuous and has a jump at the moment
$t=T-\xi$\,\,\,($0<\xi\leqslant T$), the wave $u^f(\cdot,t)$ is
also piecewise continuous and has a jump at a point $t=\xi$, and
the equality
    \begin{equation}\label{jump}
    u^f(x,T)\bigg{|}_{x=\xi-0}^{x=\xi+0}=-f(t)\bigg{|}_{t=T-\xi-0}^{t=T-\xi+0}
    \end{equation}
holds in $\mathbb R^N$. This is a simplest geometrical optics
relation: it shows that the wave discontinuity initiated by the
jump of control propagates along the semi-axis $x\geqslant 0$ with
the unit velocity, and the `amplitude' of the discontinuity
remains constant.
\smallskip

\noindent$\bf 4.$\,\,\,All the above properties and relations are
valid for the solution $ u^f_\flat$ to the dual problem
(\ref{forward problem flat}) and the solution $ u^f$ to the
extended problem (\ref{forward problem ext}).

\subsubsection*{Dynamical system}
Here problem (\ref{forward problem}) is endowed with standard
attributes of dynamical system: spaces and operators. The system
is denoted by $\alpha^T$.
\smallskip

\noindent$\bullet$\,\,\,The space of controls is $\mathscr
F^T=L_2([0,T],\mathbb{R}^N)$ with the inner product
    $$
    (f,g)_{\mathscr F^T}\,=\,\int_0^T\langle f(t), g(t)\rangle\,dt\,,
    $$
where $\langle\cdot, \cdot\rangle$ is the standard inner product
in $\mathbb R^N$, is said to be the {\it outer space} of the
system $\alpha^T$. It contains an increasing family of subspaces
    $$
    {\mathscr F}^{T,\xi}:=\{f\in \mathscr F^T\,|\,\,{\rm supp\,}f\subset [T-\xi,T]\},\qquad
    0\leqslant \xi\leqslant T
    $$
($\mathscr F^{T,0}=\{0\},\,\mathscr F^{T,T}=\mathscr F^T$) formed
by delayed controls: $T-\xi$ is the delay, $\xi$ is the action
time.

The space $\mathscr H^T:=L_2(\Omega^T,\mathbb{R}^N)$ with inner
product
    \begin{equation*}
    (u,v)_{\mathscr H^T}:=\int_{\Omega^T} \langle u(x), v(x)\rangle\, dx
    \end{equation*}
is called the {\it inner space}, the waves $u^f(\cdot,t)$ are its
elements. It contains an increasing family of subspaces
    $$
    \mathscr H^\xi:=\{y\in \mathscr H^T\,|\,\,{\rm supp\,}y\subset
    \Omega^\xi\},\qquad 0\leqslant \xi\leqslant T
    $$
($\mathscr H^{0}=\{0\}$). By (\ref{supp u^f}), we have
$u^f(\cdot,t)\in\mathscr H^\xi$ for $0\leqslant t\leqslant \xi$.
    \smallskip

\noindent$\bullet$\,\,\, The operator $W^T:\ \mathscr F^T\to
\mathscr{H}^T$,
    \begin{equation*}
    (W^Tf)(x):=u^f(x,T),\qquad x\in\Omega^T
    \end{equation*}
is said to be the {\it control operator}. By
(\ref{solution_of_main_problem}), the representation
    \begin{equation}\label{int_control}
    (W^Tf)(x)=f(T-x)+\int_x^Tw(x,s)f(T-s)ds,\qquad
    x\in\Omega^T\,.
    \end{equation}
holds. The control operator is an isomorphism of the space
$\mathscr F^T$. Indeed, the equation $W^T f=y$ is a second kind
Volterra equation, which is solvable for any $y\in\mathscr H^T$.
Moreover, (\ref{int_control}) implies
    \begin{equation}\label{controllability}
    W^T{\mathscr F}^{T,\xi}\,=\,\mathscr H^\xi\,,\qquad 0\leqslant \xi\leqslant T\,.
    \end{equation}
The second equality in (\ref{diff u^f}) can be written as:
    \begin{equation}\label{Wd^2=LW}
    W^T\,\frac{d^2}{dt^2}\,=\,-LW^T\,,
    \end{equation}
where $L=-\frac{d^2}{dx^2}+V$ is Sturm-Liouville operator, which
governs the evolution of system $\alpha^T$.
    \smallskip

\noindent$\bullet$\,\,\,The operator $R^T:\mathscr{F}^T\to
    \mathscr{F}^T,\,\,\,{\rm Dom\,}R^T=\{f\in \mathscr{F}^T\,|\,\,
    \frac{df}{dt}\in\mathscr{F}^T,\, f(0)=0\}$,
    \begin{equation*}
    (R^Tf)(t)\,:=\,u_x^f(0,t)\,,\qquad 0\leqslant t\leqslant T
    \end{equation*}
is called the {\it response operator} of the system $\alpha^T$.
Differentiation in (\ref{solution_of_main_problem}) leads to the
representation
    \begin{equation*}
    (R^Tf)(t)=-\frac{df}{dt}(t)+\int_0^tr(t-s)f(s)ds\,,\qquad 0\leqslant t\leqslant T,
    \end{equation*}
where $r(s):=w_x(0,s)$ is a $C^1$-smooth matrix-valued function
called the {\it response function}.

Also, with the system $\alpha^T$ one associates the {\it extended
response operator} $R^{2T}:\mathscr{F}^{2T}\to
\mathscr{F}^{2T},\,\,\,{\rm Dom\,}R^{2T}=\{f\in
\mathscr{F}^{2T}\,|\,\,  \frac{df}{dt}\in\mathscr{F}^{2T},\,
f(0)=0\}$,
    \begin{equation*}
    (R^{2T}f)(t)\,:=\,u_x^f(0,t)\,,\qquad 0\leqslant t\leqslant {2T},
    \end{equation*}
where $u^f$ is a solution to the extended problem (\ref{forward
problem ext}). The representation
    \begin{equation}\label{int_reaction}
    (R^{2T}f)(t)=-f'(t)+\int_0^tr(t-s)f(s)ds\,,\qquad 0\leqslant t\leqslant {2T},
    \end{equation}
holds with the matrix-valued reply function $r$. The important
fact is that the operator $R^{2T}$ and its response function
$r|_{[0,2T]}$ are determined by the potential $V|_{\Omega^T}$ (do
not depend on its values outside $\Omega^T$).
\smallskip

\noindent$\bullet$\,\,\, The problem (\ref{forward problem flat})
describes a dynamical system, which is called dual to $\alpha^T$
and denoted by $\alpha^T_\flat$. Obviously, the dual system has
the same attributes and properties as the original one. The inner
and outer spaces of both systems are the same, the corresponding
operators $W^T_\flat,\,R^T_\flat$ and $R^{2T}_\flat$ possess the
same properties and representations. As shown in the \cite{Avd Bel
96}, the response functions are related by the equality
    \begin{equation*}
    r_\flat(t)\,=\,(r(t))^\flat\,,\qquad 0\leqslant t\leqslant
    {2T}\,.
    \end{equation*}
Note that the operator $W^T_\flat$ maps $\mathscr F^T$ onto
$\mathscr H^T$ isomorphically and
 \begin{equation}\label{controllability flat}
    W^T_\flat{\mathscr F}^{T,\xi}\,=\,\mathscr H^\xi\,,\qquad 0\leqslant \xi\leqslant T
 \end{equation}
holds. Its adjoint $(W^T_\flat)^*:\mathscr H^T\to\mathscr F^T$ is
also an isomorphism.
\smallskip

\noindent$\bullet$\,\,\, The operator $C^T:\
    \mathscr{F}^T\to\mathscr{F}^T$,
    \begin{equation}\label{connecting}
    C^T:=(W_{{\flat}}^T)^*W^T
    \end{equation}
is said to be the {\it connecting operator}. By the definition, we
have
    \begin{equation*}
    (C^Tf,g)_{\mathscr{F}^T}=(W^T f,W^T_\flat
    g)_{\mathscr H^T}=(u^f(\cdot,T),u_{{\flat}}^g(\cdot,T))_{\mathscr{H}^T}\,,
    \end{equation*}
so that operator $C^T$ connects metrics of spaces $\mathscr{F}^T$
and $\mathscr{H}^T$. As a composition of two isomorphisms, it is
an isomorphism of the outer space $\mathscr F^T$.

The key fact of the BC-method as an approach to inverse problems
is a simple and explicit relation that expresses the connecting
operator via the response operator and response function. As is
shown in the \cite{Avd Bel 96}, the representation
    \begin{equation}\label{int_connecting}
    (C^Tf)(t)=f(t)+\int_0^TC^T(t,s)f(s)\,ds\,,\qquad 0\leqslant t\leqslant {T}
    \end{equation}
with the matrix kernel
    \begin{equation*}
    C^T(t,s)=\frac{1}{2}\int_{|t-s|}^{2T-t-s}r(\eta)\,d\eta\,,\qquad 0\leqslant s,t\leqslant {T}
    \end{equation*}
holds. Thus, the connecting operator is determined by the reply
function $r|_{0\leqslant t\leqslant {2T}}$.
    \smallskip

The dual system $\alpha^T_\flat$ has connecting operator
    $$
    C^T_\flat\,:=\,(W^T)^*W^T_\flat\,=\,(C^T)^*,
    $$
for which the representation (\ref{int_connecting}) is valid,
replacing the kernel $C^T(t,s)$ by
$C^T_\flat(t,s)=[C^T(t,s)]^\flat$.

\subsubsection*{Systems $\alpha^\xi$}
Consider the family of `shortened' systems
    \begin{equation*}
    \begin{cases}
    u_{tt}-u_{xx}+V(x)u=0, & x>0,\ 0<t<\xi\\
    u|_{t=0}=u_t|_{t=0}=0,& x\geqslant 0\\
    u_{x=0}=f, & 0\leqslant t\leqslant \xi
    \end{cases}
    \end{equation*}
indexed by the parameter $0<\xi\leqslant T$. Systems $\alpha^\xi$
are equipped with their spaces $\mathscr F^\xi$ and $\mathscr
H^\xi$ and operators $W^\xi,\,R^\xi,\,R^{2\xi},\,C^\xi$. Each
$C^\xi$ is an isomorphism in $\mathscr F^\xi$, the representation
(\ref{int_connecting})
    \begin{equation}\label{int_connecting xi}
    (C^\xi f)(t)=f(t)+\int_0^\xi C^\xi(t,s)f(s)\,ds\,,\qquad 0\leqslant t\leqslant {\xi}
    \end{equation}
    with matrix kernel
    \begin{equation*}
    C^\xi(t,s)=\frac{1}{2}\int_{|t-s|}^{2\xi-t-s}r(\eta)\,d\eta\,,\qquad 0\leqslant s,t\leqslant {T}
    \end{equation*}
being valid.
\smallskip

\noindent$\bullet$\,\,\, The operators $ C^\xi $ are related to
the operator $ C^T $ as follows. Recall Convention \ref{Conv 1}
and introduce the auxiliary operators
    \begin{equation*}
    {e^{T,\xi}}: \mathscr F^\xi\to\mathscr F^T,\quad ({e^{T,\xi}} f)(t):=f(t-(T-\xi)),\qquad 0\leqslant
    t\leqslant T;
    \end{equation*}
the adjoint operators are
    \begin{equation*}
    {(e^{T,\xi})^*}: \mathscr F^T\to \mathscr F^\xi,\quad (({e^{T,\xi}})^*
    f)(t):=f(t+(T-\xi))\,,\qquad 0\leqslant t\leqslant\xi\,.
    \end{equation*}
It is easy to check that
    \begin{equation*}
    {(e^{T,\xi})^*} {e^{T,\xi}}\,=\,\mathbb I_{\mathscr F^\xi};\qquad
    {e^{T,\xi}}{(e^{T,\xi})^*}=X^{T,\xi}\,,
    \end{equation*}
where $\mathbb I_{\mathscr F^\xi}$ is the unit operator, and
$X^{T,\xi}$ is the orthogonal projector in $\mathscr F^T$ onto
${\mathscr F}^{T,\xi}$, which cuts off the $\mathbb R^N$-valued
controls to the interval $[T-\xi,T]$:
    \begin{equation}\label{cutt X}
    (X^{T,\xi}f)(t):=\begin{cases}
    0, & 0\leqslant t<T-\xi\\
    f(t), & T-\xi\leqslant t\leqslant T
    \end{cases}\,.
    \end{equation}
By the use of (\ref{int_connecting}), one easily derives
    \begin{equation}\label{C^xi via C^T}
    C^\xi\,=\,{(e^{T,\xi})^*} C^T{e^{T,\xi}}\,, \qquad 0<\xi\leqslant T\,.
    \end{equation}

\noindent$\bullet$\,\,\,As easily follows from
(\ref{int_reaction}), the relationship of the extended response
ope\-rator with the `shortened' response operators is of the form
    \begin{equation*}
    R^\xi\,=\,(e^{2T,\xi})^*R^{2T} e^{2T,\xi}\,, \qquad 0<\xi\leqslant
    2T\,.
    \end{equation*}

\section{Inverse problem}\label{Sec Inverse problem}

\subsubsection*{Statement}
As noted above, the operator $R^{2T}$ is determined by the values
of the potential $V^T$ on the segment $\Omega^T$. Hence, the
relevant statement of the inverse problem, which respects such a
locality, is as follows: {\it given $R^{2T}$ to recover
$V|_{\Omega^T}$}. Also, since to give $R^{2T}$ is to know the
response matrix-function, one needs to determine $V|_{\Omega^T}$
from the given $r|_{0\leqslant t\leqslant 2T}$.

In such a statement, the problem is solved in \cite{Avd Bel 96}
and below we briefly describe a simplified version of the
procedure for solving it. The procedure is preceded with a
description of its instruments: projectors and the so-called
amplitude formula.

\subsubsection*{Projectors}

\noindent$\bullet$\,\,\, Fix a positive $\xi\leqslant T$. In the
system $\alpha^T$, the subspace
 $$
\mathscr U^\xi:= W^T{\mathscr
F}^{T,\xi}\,=\,\{u^f(\cdot,T)\,|\,\,f\in{\mathscr
F}^{T,\xi}\}\overset{\rm see\,\,(\ref{shift
time})}=\,\{u^f(\cdot,\xi)\,|\,\,f\in{\mathscr F}^{T}\}
 $$
formed by waves, is called {\it reachable} (at the moment
$t=\xi$).

The orthogonal projector in $\mathscr H^T$ onto $\mathscr U^T$ is
said to be the {\it wave projector}. By (\ref{controllability}),
it coincides with the projector in $\mathscr H^T$ onto the
subspace $\mathscr H^\xi$, which cuts off the $\mathbb R^N$-valued
functions to $\Omega^\xi$. So, we have:
    $$
    P^\xi y\,=\,\begin{cases}
    y & \text{in}\,\,\,\Omega^\xi\\
    0  & \text{in}\,\,\,\Omega^T\setminus\Omega^\xi
    \end{cases}\,, \qquad 0\leqslant \xi\leqslant {T}\,.
    $$
\smallskip

\noindent$\bullet$\,\,\,In the space $\mathscr{F}^T$, define the
operator
 \begin{equation}\label{W cal P xi= P xi W}
    \mathcal{P}^{T,\xi}\,:=\,[W^T]^{-1}P^\xi W^T\,.
 \end{equation}
Since $W^T$ acts isomorphically, whereas
$(\mathcal{P}^{T,\xi})^2=\mathcal{P}^{T,\xi}$ obviously holds, it
is a bounded projector.  Let us describe in more detail how it
acts. Begin with a general operator lemma.
\smallskip

Let $\mathscr F$ and $\mathscr H$ be the Hilbert spaces,
$\mathscr F'\subset\mathscr F$ and $\mathscr H'\subset\mathscr H$ the (closed) subspaces;
$e:\mathscr F'\to\mathscr F$ the embedding, which satisfies $e^*e=\mathbb
I_{\mathscr F'}$ and $ee^*=X$, where $X$ projects orthogonally in $\mathscr F$
onto $\mathscr F'$. Denote $\mathscr F'_\bot:=\mathscr F\ominus\mathscr F'$ and
$\mathscr H'_\bot:=\mathscr H\ominus\mathscr H'$; let $P$ be the orthogonal projector
in $\mathscr H$ onto $\mathscr H'$.

Let $W:\mathscr F\to\mathscr H$ and $V:\mathscr F\to\mathscr H$ be isomorphisms provided
$W\mathscr F'=V\mathscr F'=\mathscr H'$. Introduce the isomorphism $C:=V^*W: \mathscr F\to\mathscr F$
and the subspace $C^{-1}\mathscr F'_\bot\subset \mathscr F$. The operator
$\mathcal P:=W^{-1}PW$ acts in $\mathscr F$.
 \begin{Lemma}\label{lemma P^{T,xi}+}
Let $C':=e^*C\,e$ act isomorphically in $\mathscr F'$. Then the
decomposition in direct sum $\mathscr F=\mathscr F'\,\dot+\,C^{-1}\mathscr F'_\bot$
holds, whereas $\cal P$ is the (skew) projector in $\mathscr F$ onto
$\mathscr F'$ in parallel to $C^{-1}\mathscr F'_\bot$. The representation
  \begin{equation}\label{repres mathcal P general}
    \mathcal{P}\,=\,e\,[C']^{-1}{e^*}\,C
 \end{equation}
holds.
 \end{Lemma}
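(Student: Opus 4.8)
The plan is to recognize $\mathcal P=W^{-1}PW$ as a concrete realization of a skew projector and to pin down its range and kernel, after which both the direct-sum decomposition and the representation formula follow by elementary manipulations. Since $P^2=P$ and $W$ is an isomorphism, one immediately gets $\mathcal P^2=W^{-1}P^2W=\mathcal P$, so $\mathcal P$ is a bounded idempotent and hence $\mathscr F$ splits as the topological direct sum of its range and its kernel. Because the range of $P$ is $\mathscr H'$ and $\ker P=\mathscr H'_\bot$, applying $W^{-1}$ gives range $\mathcal P=W^{-1}\mathscr H'=\mathscr F'$ (using $W\mathscr F'=\mathscr H'$) and $\ker\mathcal P=W^{-1}\mathscr H'_\bot$.

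The key step — and the one I expect to be the main obstacle — is to rewrite the kernel $W^{-1}\mathscr H'_\bot$ in the announced form $C^{-1}\mathscr F'_\bot$. For this I would prove the orthogonality identity $V^*\mathscr H'_\bot=\mathscr F'_\bot$ for the isomorphism $V$ satisfying $V\mathscr F'=\mathscr H'$. One inclusion is a one-line pairing argument: for $h\in\mathscr H'_\bot$ and $f\in\mathscr F'$ one has $\langle V^*h,f\rangle=\langle h,Vf\rangle=0$ since $Vf\in\mathscr H'$, whence $V^*\mathscr H'_\bot\subset\mathscr F'_\bot$; the reverse inclusion follows from $V^*$ being onto together with $V^{-1}\mathscr H'=\mathscr F'$ (equivalently, from the general identity $V^*(\mathscr M^\perp)=(V^{-1}\mathscr M)^\perp$ for closed $\mathscr M$). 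Granting this, $\mathscr H'_\bot=(V^*)^{-1}\mathscr F'_\bot$, and since $C=V^*W$ gives $C^{-1}=W^{-1}(V^*)^{-1}$, we obtain $\ker\mathcal P=W^{-1}\mathscr H'_\bot=W^{-1}(V^*)^{-1}\mathscr F'_\bot=C^{-1}\mathscr F'_\bot$. Combined with range $\mathcal P=\mathscr F'$, this yields $\mathscr F=\mathscr F'\,\dot+\,C^{-1}\mathscr F'_\bot$ and identifies $\mathcal P$ as the skew projector onto $\mathscr F'$ parallel to $C^{-1}\mathscr F'_\bot$.

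It remains to verify the representation $\mathcal P=e[C']^{-1}e^*C$, and this is where the hypothesis that $C'=e^*Ce$ is an isomorphism is genuinely used. I would set $Q:=e[C']^{-1}e^*C$; its range lies in the image of $e$, which is $\mathscr F'$, and for $f=e\psi\in\mathscr F'$ one computes $Qf=e[C']^{-1}e^*Ce\,\psi=e[C']^{-1}C'\psi=e\psi=f$, using $e^*e=\mathbb I_{\mathscr F'}$ and the invertibility of $C'$; hence $Q$ is idempotent with range exactly $\mathscr F'$. For the kernel, since $e$ is injective and $[C']^{-1}$ invertible, $Qf=0$ iff $e^*Cf=0$ iff $Cf\in\ker e^*=\mathscr F'_\bot$ (the last equality because $ee^*=X$ is the orthogonal projector onto $\mathscr F'$), i.e. iff $f\in C^{-1}\mathscr F'_\bot$. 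Thus $Q$ and $\mathcal P$ are projectors sharing the range $\mathscr F'$ and the kernel $C^{-1}\mathscr F'_\bot$, so $Q=\mathcal P$, which is precisely the representation (\ref{repres mathcal P general}).
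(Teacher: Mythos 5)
Your proof is correct and follows essentially the same route as the paper: both identify $\mathcal P$ and $Q:=e\,[C']^{-1}e^*C$ as idempotents with range $\mathscr F'$ and kernel $C^{-1}\mathscr F'_\bot$ (the latter via the identity $V^*\mathscr H'_\bot=\mathscr F'_\bot$ combined with $C^{-1}=W^{-1}[V^*]^{-1}$), and conclude that two projectors with the same range and kernel coincide. The only difference is that you spell out the short pairing argument for $V^*\mathscr H'_\bot=\mathscr F'_\bot$, which the paper asserts without proof.
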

\begin{proof}
$\bf 1.$\,\,\,The operators $W$ and $V$ act isomorphically, and we
have $W\mathscr F'=V\mathscr F'=\mathscr H'$. The latter equality
implies $V^*\mathscr H'_\bot=\mathscr F'_\bot$ and leads to
 \begin{align*}
&
\mathscr F=W^{-1}[\mathscr H'\oplus\mathscr H'_\bot]=W^{-1}\mathscr H'\,\dot+\,W^{-1}\mathscr H'_\bot=\mathscr F'\,\dot+\,C^{-1}\mathscr F'_\bot\,.
 \end{align*}

$\bf 2.$\,\,\,One has $\mathcal P^2=\mathcal P$ just by the
definition of $\mathcal P$.

If $f\in\mathscr F'$ then $W f\in\mathscr{H}'$ and, hence,
 $$
\mathcal P f=W^{-1}P W f=W^{-1} W f=f\,.
 $$

If $f\in C^{-1}\mathscr F'_\bot$ then $f=C^{-1}g$ with $g\in \mathscr F'_\bot$
and one has
 $$
\mathcal{P}f=W^{-1}P W^T C^{-1}g=W^{-1}P[V^*]^{-1}g=0
 $$
in view of $[V^*]^{-1}g\in\mathscr H'_\bot$.

Thus, $\mathcal{P}$ is an idempotent, which acts identically on
$\mathscr F'$ and annuls $C^{-1}\mathscr F'_\bot$. Therefore, it
projects in $\mathscr F$ onto $\mathscr{F}'$ in parallel to
$C^{-1}\mathscr{F}'_\bot$.
\smallskip

$\bf 3.$\,\,\,Let $Q$ be the right hand side of (\ref{repres
mathcal P general}). Then we have
 $$
Q^2=e\,[C']^{-1}{e^*}\,C\,e\,[C']^{-1}{e^*}\,C=e\,[C']^{-1}C'\,[C']^{-1}{e^*}\,C=Q\,.
 $$

If $f\in\mathscr F'$ then $X f=ee^*f=f$ and, hence,
 $$
Q f=e\,[C']^{-1}{e^*}\,Cf=e\,[C']^{-1}{e^*}\,Ce\,e^*f
=e\,[C']^{-1}C'e^*f=f\,.
 $$

If $f\in C^{-1}\mathscr F'_\bot$ then $f=C^{-1}g$ with $g\in \mathscr F'_\bot$
and one has
 $$
Qf=e\,[C']^{-1}{e^*}\,CC^{-1}g=0
 $$
in view of $e^*\mathscr F'_\bot=\{0\}$.

Thus, $Q$ is an idempotent, which acts identically on $\mathscr
F'$ and annuls $C^{-1}\mathscr F'_\bot$. Therefore, it projects in
$\mathscr F$ onto $\mathscr{F}'$ in parallel to
$C^{-1}\mathscr{F}'_\bot$ and, hence, coincides with $\mathcal P$.
\end{proof}
\smallskip

Return to the definition (\ref{W cal P xi= P xi W}). Applying the
Lemma \ref{lemma P^{T,xi}+} to $\mathscr F=\mathscr F^T$, $\mathscr F'=\mathscr
F^{T,\xi}$, $\mathscr F'_\bot=\mathscr F^{T,\xi}_\bot=\mathscr F^T\ominus\mathscr
F^{T,\xi}$, $\mathscr H=\mathscr H^T$, $\mathscr H'=\mathscr H^{\xi}$, $W=W^T$,
$V=W^T_\flat$, $\mathcal P=\mathcal P^{T,\xi}$, and referring to
(\ref{C^xi via C^T}), we arrive at the following.

 \begin{Corollary}\label{Cor P^{T,xi}}
Operator $\mathcal{P}^{T,\xi}$ is the (skew) projector in
$\mathscr F^T$ onto $\mathscr{F}^{T,\xi}$ in parallel to
$[C^T]^{-1}\mathscr{F}^{T,\xi}_{\bot}$. The representation
 \begin{equation}\label{repres mathcal P}
    \mathcal{P}^{T,\xi}\,=\,{e^{T,\xi}} [C^\xi]^{-1}{(e^{T,\xi})^*} C^T\,,\qquad
    0<\xi\leqslant T
 \end{equation}
holds.
 \end{Corollary}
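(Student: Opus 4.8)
The plan is to specialize the abstract Lemma \ref{lemma P^{T,xi}+} to the data of the dynamical system $\alpha^T$, so that the whole content of the Corollary reduces to checking that each hypothesis of the Lemma is met in this concrete setting. Once this is done, the two assertions — the skew-projector description and the representation (\ref{repres mathcal P}) — follow verbatim from the Lemma's conclusion together with the identifications $C=C^T$ and $C'=C^\xi$, with no further computation.

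First I would fix the dictionary between the abstract and the concrete objects, exactly as indicated just before the Corollary: $\mathscr F=\mathscr F^T$, $\mathscr F'=\mathscr F^{T,\xi}$, $\mathscr F'_\bot=\mathscr F^{T,\xi}_\bot$, $\mathscr H=\mathscr H^T$, $\mathscr H'=\mathscr H^\xi$, $W=W^T$, $V=W^T_\flat$, $\mathcal P=\mathcal P^{T,\xi}$, and $e=e^{T,\xi}$. The one point requiring care is that the abstract $e:\mathscr F'\to\mathscr F$ is the inclusion of a genuine subspace, whereas $e^{T,\xi}:\mathscr F^\xi\to\mathscr F^T$ maps a \emph{separate} copy $\mathscr F^\xi$ isometrically onto the subspace $\mathscr F^{T,\xi}$ by a time shift. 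I would therefore read the Lemma through the identification $\mathscr F^\xi\cong\mathscr F^{T,\xi}$ furnished by $e^{T,\xi}$, noting that the relations ${(e^{T,\xi})}^*{e^{T,\xi}}=\mathbb I_{\mathscr F^\xi}$ and ${e^{T,\xi}}{(e^{T,\xi})}^*=X^{T,\xi}$ established above are precisely the hypotheses $e^*e=\mathbb I_{\mathscr F'}$ and $ee^*=X$.

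Next I would verify the remaining hypotheses. That $W^T$ and $W^T_\flat$ are isomorphisms of $\mathscr F^T$ onto $\mathscr H^T$ carrying $\mathscr F^{T,\xi}$ onto $\mathscr H^\xi$ is exactly (\ref{controllability}) and (\ref{controllability flat}); hence $C:=(W^T_\flat)^*W^T=C^T$ is the isomorphism (\ref{connecting}). The abstract reduced operator $C'=e^*Ce$ becomes ${(e^{T,\xi})}^*C^T e^{T,\xi}$, which by (\ref{C^xi via C^T}) equals $C^\xi$. The one genuinely substantive input — the step I expect to carry the weight — is the Lemma's hypothesis that $C'$ act isomorphically: here it amounts to the fact, recorded in the description of the shortened systems, that \emph{every} $C^\xi$ is an isomorphism of $\mathscr F^\xi$. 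It is essential that this be invoked as a property of the system rather than deduced, since it is exactly the condition that the sufficiency part of the paper will later have to promote to a hypothesis on abstract data $R^{2T}$.

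Finally, with all hypotheses in force, I would quote the Lemma directly. Its conclusion gives the decomposition $\mathscr F^T=\mathscr F^{T,\xi}\,\dot+\,[C^T]^{-1}\mathscr F^{T,\xi}_\bot$ and identifies $\mathcal P^{T,\xi}$ as the skew projector onto $\mathscr F^{T,\xi}$ parallel to $[C^T]^{-1}\mathscr F^{T,\xi}_\bot$, while the representation (\ref{repres mathcal P general}), after substituting $e=e^{T,\xi}$, $C'=C^\xi$, and $C=C^T$, is exactly the asserted formula (\ref{repres mathcal P}). The proof is thus entirely a matter of matching notation and invoking the already-established isomorphism properties; the only place where one must pause is the identification in the second step, and the only place where a nonformal fact enters is the isomorphism of $C^\xi$ in the third.
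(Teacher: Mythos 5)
Your proposal is correct and follows exactly the paper's own route: the paper proves the Corollary in one sentence, by applying Lemma \ref{lemma P^{T,xi}+} with the identifications $\mathscr F=\mathscr F^T$, $\mathscr F'=\mathscr F^{T,\xi}$, $\mathscr H=\mathscr H^T$, $\mathscr H'=\mathscr H^\xi$, $W=W^T$, $V=W^T_\flat$, $e=e^{T,\xi}$ and invoking (\ref{C^xi via C^T}) to identify $C'=C^\xi$, which is precisely your dictionary. Your extra care about reading the Lemma through the isometric identification $\mathscr F^\xi\cong\mathscr F^{T,\xi}$ (since $e^{T,\xi}$ is a shift rather than a literal inclusion, though it satisfies the same relations $e^*e=\mathbb I$, $ee^*=X$ used in the Lemma's proof) is a legitimate refinement that the paper leaves implicit.
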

\noindent$\bullet$\,\,\,Due to the complete equality of systems
$\alpha^T$ and $\alpha^T_\flat$, the operator
\begin{equation}\label{W cal P xi= P xi W flat}
    \mathcal{P}^{T,\xi}_\flat\,:=\,[W^T_\flat]^{-1}P^\xi W^T_\flat
 \end{equation}
has the same properties as $\mathcal{P}^{T,\xi}$. Namely, it is
the projector in $\mathscr F^T$ onto $\mathscr F^{T,\xi}$ in parallel of the
subspace $[(C^T)^*]^{-1}\mathscr F^{T,\xi}_\bot$, and is represented in
the form.
 \begin{equation}\label{repres mathcal P flat}
    \mathcal{P}^{T,\xi}_\flat\,=\,{e^{T,\xi}} [(C^\xi)^*]^{-1}{(e^{T,\xi})^*} (C^T)^*\,,\qquad
    0<\xi\leqslant T\,.
 \end{equation}

\subsubsection*{Amplitude formula}
For a positive $\xi\leqslant T$ and control $f\in\mathscr M^T$,
one has
    \begin{align*}
    W^T\mathcal P^{T,\xi} f\overset{\rm see\,\, (\ref{W cal P xi= P xi W})}=P^\xi W^T
    f=P^\xi u^f(\cdot,T)= \begin{cases} u^f(\cdot,T) &
    \text{in}\,\,\,\Omega^\xi\\
    0 & \text{in}\,\,\,\Omega^T\setminus\Omega^\xi
    \end{cases}\,.
    \end{align*}
The control $\mathcal P^{T,\xi} f\in{\mathscr F}^{T,\xi}$ vanishes
at $0\leqslant t<T-\xi$ and, in the generic case, has an $\mathbb
R^N$-valued `jump' at the moment $t=T-\xi$. The wave $u^{\mathcal
P^{T,\xi} f}=P^\xi   u^f(\cdot,T)\in\mathscr H^\xi$ vanishes
outside $\Omega^\xi$ and also has a jump at the point $x=\xi$. The
values (amplitudes) of these jumps are related by the equality
(\ref{jump}). Since the wave $u^f(\cdot,T)$ is continuous in
$\Omega^T$, we have the relation
    \begin{equation*}
    (\mathcal{P}^{T,\xi} f)(T-\xi+0)\,=\,(P^\xi
    u^f(\cdot,T))(\xi-0)=u^f(\xi,T)\,.
    \end{equation*}
Writing it in the form
    \begin{equation}\label{AF}
    (W^T f)(\xi)\,=\,(\mathcal{P}^{T,\xi} f)(T-\xi+0)\,, \qquad
    \xi\in\Omega^T,
    \end{equation}
we get the so-called {\it amplitude formula}, which is a simplest
example of the geometrical optics relations describing the
propagation of singularities in the system $\alpha^T$. Formulas of
this type play the key role in all basic versions of the BC-method
\cite{B CIRM, B EACM, B UMN}.

\subsubsection*{Recovering the potential}
Let $r|_{0\leqslant t\leqslant 2T}$ be the response
matrix-function of a dynamical system $\alpha^T$ of the form
(\ref{forward problem}). The following procedure recovers the
potential $V$ on the segment $\Omega^T$.
\smallskip

\noindent${\bf A.}$\,\,\, Given $r$ determine the operators
$C^\xi$ for all $0\leqslant\xi\leqslant T$ by the use of
representation (\ref{int_connecting xi}).
\smallskip

\noindent${\bf B.}$\,\,\,Find the projectors $\mathcal P^{T,\xi}$
for $0\leqslant\xi\leqslant T$ by (\ref{repres mathcal P}).
\smallskip

\noindent${\bf C.}$\,\,\,Determine the control operator $W^T$ by
means of the formula (\ref{AF}) and then find its kernel $w$ (see
(\ref{int_control})). Knowing the kernel, recover the potential
$V|_{\Omega^T}$ by $V(x)=-2\,\frac{dw(x,x)}{dx}$\,\, (see
(\ref{goursat_problem})).

\section{Data characterization}

To characterize the data of the inverse problem under
consideration is to provide the necessary and sufficient
conditions on a matrix-function $r$, which guarantee that it is
the response function of some system $\alpha^T$ and thereby ensure
the solvability of the inverse problem. As will be shown, these
conditions are that all operators defined (via $r$) by the
right-hand side of (\ref{int_connecting xi}) act isomorphically in
the corresponding spaces. The necessity is already established.
Indeed, if $r$ {\it is the reply function}, the right-hand side
coincides with the connecting operator $C^\xi$ of the system
$\alpha^\xi$, which is isomorphism in $\mathscr F^\xi$.
Sufficiency is more complicated and the rest of the paper is
devoted to its proof.

The proof of sufficiency is constructive. In fact, it reduces to
applying the procedure ${\bf A\!-\!C}$ to the given function $r$.
As the result, we construct some dynamical system $\alpha^T$. In
course of the construction, it is verified that the isomorphism of
all $ C^\xi $ ensures that all steps of the procedure are
realizable. At the final step, we show that the response function
of the constructed system coincides with the function $r$, with
which we has began.

We proceed to the implementation of this program, starting with an
exact statement of the main result.
    \begin{Theorem}\label{Th 1}
A matrix-function $r \in C^1([0,2T]; \mathbb M^N)$ is the response
function of a dynamical system of the form (\ref{forward problem})
if and only if for every positive $\xi\leqslant T$ the operator
$C^\xi$ defined by
        \begin{equation}\label{int_connecting xi+}
        (C^\xi f)(t)=f(t)+\int_0^\xi C^\xi(t,s)f(t-s)\,ds\,,\qquad 0\leqslant t\leqslant {\xi}
        \end{equation}
with the kernel
        \begin{equation}\label{ker_connecting}
        C^\xi(t,s)=\frac{1}{2}\int_{|t-s|}^{2\xi-t-s}r(\eta)\,d\eta\,.
        \end{equation}
is an isomorphism of the space $\mathscr F^\xi$.
    \end{Theorem}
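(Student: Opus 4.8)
The necessity is already at hand: if $r$ is the response function of some system $\alpha^T$, then by the representation (\ref{int_connecting xi}) the operator defined by (\ref{int_connecting xi+})--(\ref{ker_connecting}) is exactly the connecting operator $C^\xi$ of the shortened system $\alpha^\xi$, and each such $C^\xi$ is an isomorphism of $\mathscr F^\xi$. The whole content is therefore the sufficiency, and the plan is to run the reconstruction procedure $\mathbf A$--$\mathbf C$ on the given $r$, to use the hypothesis to certify that every step is realizable, and finally to check that the system produced has response function equal to $r$.

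First I would build the projectors of step $\mathbf B$. The kernels (\ref{ker_connecting}) determine all operators $C^\xi$, $0<\xi\le T$, each an isomorphism by hypothesis, so the representation (\ref{repres mathcal P}) can be read as a \emph{definition}: put $Q^{T,\xi}:=e^{T,\xi}[C^\xi]^{-1}(e^{T,\xi})^*C^T$. I would then verify, purely algebraically and without any appeal to a wave operator, that $Q^{T,\xi}$ is a bounded idempotent that projects $\mathscr F^T$ onto $\mathscr F^{T,\xi}$ along $[C^T]^{-1}\mathscr F^{T,\xi}_\bot$. The idempotency is the computation of part $\mathbf 3$ of the proof of Lemma \ref{lemma P^{T,xi}+}, and the accompanying direct-sum decomposition $\mathscr F^T=\mathscr F^{T,\xi}\dot+[C^T]^{-1}\mathscr F^{T,\xi}_\bot$ can be reproved directly from the identity (\ref{C^xi via C^T}), the relations $(e^{T,\xi})^*e^{T,\xi}=\mathbb I_{\mathscr F^\xi}$ and $e^{T,\xi}(e^{T,\xi})^*=X^{T,\xi}$, and the invertibility of $C^\xi$ and $C^T$. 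Thus the entire family of projectors exists precisely because all $C^\xi$ are isomorphisms.

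Next I would reconstruct the control operator of step $\mathbf C$. Reading the amplitude formula (\ref{AF}) as a definition, I set $(W^Tf)(\xi):=(Q^{T,\xi}f)(T-\xi+0)$ for smooth $f$, i.e.\ the right-hand value of the projected control at the jump point $t=T-\xi$. The task is to show that the operator so produced is a second-kind Volterra operator of the form (\ref{int_control}) with a kernel $w$ that is $C^2$ on the triangle $\{0\le x\le s\le T\}$ and vanishes at $x=0$; equivalently, that $W^T$ is an isomorphism enjoying the causal range property $W^T\mathscr F^{T,\xi}=\mathscr H^\xi$ for every $\xi$, which is exactly the information carried by the family $\{Q^{T,\xi}\}$. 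The $C^1$-regularity of $r$ is what feeds the $C^2$-smoothness of $w$. With $w$ in hand I define the potential by $V(x):=-2\,\dfrac{d}{dx}w(x,x)$ as in (\ref{goursat_problem}) and check that $V\in C^1_{\rm loc}(\Omega;\mathbb M^N)$.

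The last stage is where the real obstacle lies, and it is twofold. I must show that the reconstructed kernel $w$ genuinely solves the Goursat problem (\ref{goursat_problem}) with the extracted $V$ --- so that, by uniqueness for (\ref{goursat_problem}), $W^T$ is the \emph{true} control operator of the system $\alpha^T$ governed by $V$ --- and that the boundary trace of that system returns the function $r$ we started from, i.e.\ $w_x(0,\cdot)=r$ in (\ref{int_reaction}). It is here that the \emph{full} family of isomorphisms, rather than $C^T$ alone, is indispensable: the jump amplitude read off by (\ref{AF}) at each intermediate time $\xi$ controls the trace $w(\xi,\cdot)$, so the invertibility of $C^\xi$ for \emph{every} $\xi\le T$ is what forces the correct triangular structure of $W^T$ along the whole diagonal and thereby the \emph{locality} of $V$ (the absence of nonlocal Volterra additives). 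Requiring only $C^T$ to be invertible pins $W^T$ down at the final time but leaves the intermediate jumps uncontrolled; this is the gap that invalidates the claim of \cite{Avd Bel 96} and the precise reason the corrected hypothesis must demand that all $C^\xi$ be isomorphisms.
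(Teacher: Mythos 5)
Your first two stages track the paper exactly: the projectors $\mathcal P^{T,\xi}$ are indeed \emph{defined} by (\ref{repres mathcal P xi+}), shown to be skew projectors by repeating the algebra of item $\bf 3$ of Lemma \ref{lemma P^{T,xi}+}, and the operator $W^T$ is defined by reading the amplitude formula (\ref{AF}) as a definition (\ref{W_T!}), which yields the Volterra representation (\ref{W_T_int}) and the candidate potential $V(x)=-2\,\frac{d}{dx}w(x,x)$. The gap is in your last stage, which is precisely where the paper's real work lies. You state the goal --- prove that the reconstructed $w$ solves the Goursat problem (\ref{goursat_problem}), equivalently that the operator $L=-W^T\frac{d^2}{dt^2}[W^T]^{-1}$ is \emph{local} --- but you give no argument for it, and the mechanism you invoke is not the right one. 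Invertibility of all $C^\xi$ is what makes the construction of $W^T$ possible at every $x=\xi$, and the triangular property $W^T\mathscr F^{T,\xi}=\mathscr H^\xi$ then comes for free from (\ref{W_T_int}); but this triangular structure by itself does \emph{not} force locality: an operator of the form $-y''+Vy+\int_x^T k(x,s)y(s)\,ds$, as in (\ref{L local}), is perfectly compatible with it. As the paper stresses in its closing comments, locality is a consequence of the very specific form (\ref{ker_connecting}) of the kernel of $C^T$, not of the invertibility hypotheses alone.

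Concretely, the following ingredients are absent from your proposal. First, the dual family $\mathcal P^{T,\xi}_\flat$, $W^T_\flat$ is never mentioned, yet the whole endgame runs through it. Second, the factorization $C^T=(W^T_\flat)^*W^T$ (Lemma \ref{lemma C=WW}) must be \emph{proved}, not assumed: the paper shows that $A:=W^T[C^T]^{-1}(W^T_\flat)^*$ commutes with the nest of projectors $\{Y^\xi\}$ (via Lemma \ref{lemma 3}), that $A=\mathbb I+K$ with $K$ compact, and that $K^*K$ is then a self-adjoint compact operator commuting with the nest, hence zero --- a genuinely operator-theoretic step with no counterpart in your text. Third, the locality of $L$ (items $\bf 1$--$\bf 2$ of the proof of Lemma \ref{lemma L Sturm-Liouville}) is derived from this factorization together with the symmetry relation (\ref{aux rel}), $(C^T\ddot g,f)_{\mathscr F^T}=(C^Tg,\ddot f)_{\mathscr F^T}$, which is exactly where the kernel (\ref{ker_connecting}) enters; only then does the computation (\ref{calculus}) plus locality kill the Volterra remainder and give $L=-\frac{d^2}{dx^2}+V$. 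Finally, the identification of the response function is not a separate trace computation as you suggest: it follows because the connecting operator of the constructed system coincides with $C^T$ by (\ref{C_and_W}), so the integral kernels, and hence the response functions, coincide. Without these steps your final paragraph is a restatement of what must be proved, together with a heuristic (invertibility of all $C^\xi$ forces locality) that the paper's own analysis shows to be insufficient.
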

Let us make a remark about the notation in the forthcoming proof
of sufficiency. In the above statement, the symbol $C^\xi$ does
not assume that this operator is the connecting operator of some
system $\alpha^\xi$: it is only a candidate for this role. Other
symbols $\mathcal P^{T,\xi},\,W^T,\,R^{2T}$, etc, are also used in
this way. Such a trick simplifies the notation and, on the other
hand, clarifies the meaning of the introduced objects.

\subsubsection*{Projectors}
\noindent$\bullet$\,\,\,Realizing the plan outlined above, we
introduce the relevant analogues of the objects belonging to the
system $\alpha^T$. Begin with the projectors $\mathcal P^{T,\xi}$.
However, to introduce them by (\ref{W cal P xi= P xi W}) is not
possible because, at the moment, no $W^T$ is given. Therefore,
focusing on the representation (\ref{repres mathcal P}), we {\it
define}
 \begin{equation}\label{repres mathcal P xi+}
    \mathcal{P}^{T,\xi}\,:=\,e^{T,\xi}\, [C^\xi]^{-1} (e^{T,\xi})^* C^T\,,\qquad 0<\xi\leqslant T
 \end{equation}
that is correct since all $C^\xi$ are isomorphisms by assumption
of the Theorem. Then, by perfect analogy with (\ref{repres mathcal
P xi+}), we put
 \begin{equation}\label{repres mathcal P flat+}
    \mathcal{P}^{T,\xi}_\flat\,:=\,{e^{T,\xi}} [(C^\xi)^*]^{-1}{(e^{T,\xi})^*} (C^T)^*\,,\qquad
    0<\xi\leqslant T
 \end{equation}
(compare with (\ref{W cal P xi= P xi W flat}), (\ref{repres
mathcal P flat})) that is correct since $(C^\xi)^*$ are
isomorphisms.
   \begin{Proposition}\label{Prop CP=PflatC}
Operator $\mathcal{P}^{T,\xi}$ is the projector in $\mathscr F^T$ onto
$\mathscr F^{T,\xi}$ in parallel to $[C^T]^{-1}\mathscr F^{T,\xi}_\bot$.
Operator $\mathcal P^{T,\xi}_\flat$ projects in $\mathscr F^T$ onto
$\mathscr F^{T,\xi}$ in parallel to $[(C^T)^*]^{-1}\mathscr F^{T,\xi}_\bot$. The
equalities
 \begin{equation}\label{CP=PflatC}
        C^T\mathcal{P^\xi}=(\mathcal{P^\xi}_\flat)^*C^T,\quad
\mathcal{P^\xi}(C^T)^{-1}=(C^T)^{-1}(\mathcal{P^\xi_{{\flat}}})^*\,
  \end{equation}
are valid.
   \end{Proposition}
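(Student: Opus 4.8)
The plan is to argue entirely algebraically from the definitions (\ref{repres mathcal P xi+}) and (\ref{repres mathcal P flat+}), \emph{without} invoking Corollary \ref{Cor P^{T,xi}} or Lemma \ref{lemma P^{T,xi}+}: at this point in the argument no control operator $W^T$ is yet available, so the hypotheses of that lemma cannot be checked, and the three identities $(e^{T,\xi})^*e^{T,\xi}=\mathbb I_{\mathscr F^\xi}$, $e^{T,\xi}(e^{T,\xi})^*=X^{T,\xi}$ and the key relation $C^\xi=(e^{T,\xi})^*C^Te^{T,\xi}$ from (\ref{C^xi via C^T}) must carry the whole load. Writing $e:=e^{T,\xi}$ for brevity, every step below is a rewriting that uses $C^\xi=e^*C^Te$ to collapse the middle factor.

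First I would establish the projector statement for $\mathcal P^{T,\xi}=e[C^\xi]^{-1}e^*C^T$. Idempotency is immediate: substituting $e^*C^Te=C^\xi$ into $(\mathcal P^{T,\xi})^2$, the middle block $[C^\xi]^{-1}(e^*C^Te)[C^\xi]^{-1}$ collapses to $[C^\xi]^{-1}$, so $(\mathcal P^{T,\xi})^2=\mathcal P^{T,\xi}$. For the range, $\mathcal P^{T,\xi}$ maps into $\mathrm{Ran}\,e=\mathscr F^{T,\xi}$; and for $f\in\mathscr F^{T,\xi}$ one has $f=X^{T,\xi}f=eg$ with $g=e^*f$, whence $\mathcal P^{T,\xi}f=e[C^\xi]^{-1}(e^*C^Te)g=eg=f$, so $\mathrm{Ran}\,\mathcal P^{T,\xi}=\mathscr F^{T,\xi}$. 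For the kernel I would use that $e$ is injective and $[C^\xi]^{-1}$ invertible, giving $\ker\mathcal P^{T,\xi}=\ker(e^*C^T)=[C^T]^{-1}\ker e^*$; since $\ker e^*=(\mathrm{Ran}\,e)^\perp=\mathscr F^{T,\xi}_\bot$, this is exactly $[C^T]^{-1}\mathscr F^{T,\xi}_\bot$. Boundedness of the idempotent then yields $\mathscr F^T=\mathscr F^{T,\xi}\,\dot+\,[C^T]^{-1}\mathscr F^{T,\xi}_\bot$ and the stated parallel decomposition. The claim for $\mathcal P^{T,\xi}_\flat$ is the same computation with $C^T$ replaced by $(C^T)^*$ and $C^\xi$ by $(C^\xi)^*=e^*(C^T)^*e$.

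For the intertwining equalities (\ref{CP=PflatC}) I would take the adjoint of (\ref{repres mathcal P flat+}). Using $(A^{-1})^*=(A^*)^{-1}$ together with $\big((C^\xi)^*\big)^*=C^\xi$ one gets $\big(((C^\xi)^*)^{-1}\big)^*=[C^\xi]^{-1}$, and hence $(\mathcal P^{T,\xi}_\flat)^*=C^Te[C^\xi]^{-1}e^*$. Multiplying on the right by $C^T$ gives $(\mathcal P^{T,\xi}_\flat)^*C^T=C^Te[C^\xi]^{-1}e^*C^T$, which is precisely $C^T\mathcal P^{T,\xi}$ read off from (\ref{repres mathcal P xi+}); this is the first equality. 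The second follows by multiplying the first by $(C^T)^{-1}$ on both sides.

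I expect no serious analytic difficulty here: the content is careful bookkeeping of adjoints. The two spots that demand attention are the \emph{exact} identification $\ker\mathcal P^{T,\xi}=[C^T]^{-1}\mathscr F^{T,\xi}_\bot$ (rather than a mere inclusion), which is what fixes the direction of projection and delivers the direct sum, and the adjoint identity $\big(((C^\xi)^*)^{-1}\big)^*=[C^\xi]^{-1}$, which is what makes $(\mathcal P^{T,\xi}_\flat)^*$ display the same middle factor $e[C^\xi]^{-1}e^*$ that appears in $\mathcal P^{T,\xi}$; everything else is forced by the relation $C^\xi=e^*C^Te$.
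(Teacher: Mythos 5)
Your proposal is correct and takes essentially the same route as the paper: the paper's own proof consists precisely of repeating the algebra of item ${\bf 3}$ from the proof of Lemma \ref{lemma P^{T,xi}+} (idempotency, identity on $\mathscr F^{T,\xi}$, annihilation of $[C^T]^{-1}\mathscr F^{T,\xi}_\bot$, everything driven by the relation $C^\xi=(e^{T,\xi})^*C^Te^{T,\xi}$) and then reading off (\ref{CP=PflatC}) from the representations (\ref{repres mathcal P xi+}) and (\ref{repres mathcal P flat+}), which is exactly what you do. If anything, your exact identification $\ker\mathcal P^{T,\xi}=[C^T]^{-1}\mathscr F^{T,\xi}_\bot$, with the direct-sum decomposition then following from boundedness of the idempotent alone, is slightly more self-contained than the paper's cross-reference, since in Lemma \ref{lemma P^{T,xi}+} that decomposition was produced in item ${\bf 1}$ using the operators $W$ and $V$, which are indeed unavailable at this point of the sufficiency proof.
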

Indeed, to verify the first and second assertions, one needs just
to repeat the arguments of the item ${\bf 3.}$ from the proof of
Lemma \ref{lemma P^{T,xi}+}. Then the equalities (\ref{CP=PflatC})
easily follow from (\ref{repres mathcal P xi+}) and (\ref{repres
mathcal P flat+}).
\smallskip

\noindent$\bullet$\,\,\,Here we derive an efficient representation
for the projectors $\mathcal{P}^{T,\xi}$ and
$\mathcal{P}^{T,\xi}_{{\flat}}$. Recall that the projector
$X^{T,\xi}$ is defined by (\ref{cutt X}).
    \begin{Lemma}\label{L repres cal P^xi integral}
For any $0<\xi\leqslant T$, the representation
        \begin{equation}\label{P_int repres}
        (\mathcal{P}^{T,\xi}
        f)(t)=(X^{T,\xi}f)(t)+\int_0^{T-\xi}m^\xi(t,s)f(s)\,ds,\qquad
        0\leqslant t\leqslant T
        \end{equation}
holds with a piecewise $C^2$-smooth kernel $m^\xi$, which obeys
$m^\xi(t,s)|_{t>T-\xi}\equiv 0$.
    \end{Lemma}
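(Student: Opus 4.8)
The plan is to extract the representation directly from the definition (\ref{repres mathcal P xi+}), using the geometric description of $\mathcal{P}^{T,\xi}$ furnished by Proposition \ref{Prop CP=PflatC}. Since $\mathcal{P}^{T,\xi}$ projects $\mathscr F^T$ onto $\mathscr F^{T,\xi}$, every control $\mathcal{P}^{T,\xi}f$ is supported in $[T-\xi,T]$; it therefore vanishes for $t<T-\xi$, so the entire content of the lemma is an explicit formula for its values on $[T-\xi,T]$ together with the smoothness of the resulting kernel. Accordingly I would set $q:=\mathcal{P}^{T,\xi}f-X^{T,\xi}f$, which is again supported in $[T-\xi,T]$, and reduce the statement to showing that $q$ is obtained by integrating $f|_{[0,T-\xi]}$ against a piecewise $C^2$ kernel. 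The support property of $m^\xi$ then follows at once from $\mathcal{P}^{T,\xi}f\in\mathscr F^{T,\xi}$.

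First I would invoke the skew-projector characterization: $p:=\mathcal{P}^{T,\xi}f$ is the unique element of $\mathscr F^{T,\xi}$ with $f-p\in[C^T]^{-1}\mathscr F^{T,\xi}_\bot$, i.e. with $C^T(f-p)$ supported in $[0,T-\xi]$; equivalently $(C^Tp)(t)=(C^Tf)(t)$ for $t\in[T-\xi,T]$. Writing $C^T=\mathbb I+K^T$ with $K^T$ the integral operator of kernel (\ref{ker_connecting}), substituting $p=X^{T,\xi}f+q$ and cancelling the terms common to both sides (the value $f(t)$ and the integral of $f$ over $[T-\xi,T]$) turns this identity into the second-kind equation $q(t)+\int_{T-\xi}^{T}C^T(t,v)q(v)\,dv=\int_0^{T-\xi}C^T(t,v)f(v)\,dv=:F(t)$ on $[T-\xi,T]$, whose right-hand side depends only on $f|_{[0,T-\xi]}$. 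By (\ref{C^xi via C^T}) the operator on the left, transported by the shift $e^{T,\xi}$ to $[0,\xi]$, is precisely $C^\xi$, invertible by the hypothesis of Theorem \ref{Th 1}; solving gives $q=e^{T,\xi}[C^\xi]^{-1}(e^{T,\xi})^*F$, and reading off the kernel yields $m^\xi(t,s)=C^T(t,s)+\int_{T-\xi}^{T}\widetilde N^\xi(t,u)\,C^T(u,s)\,du$, where $\widetilde N^\xi$ is the (shifted) kernel of $[C^\xi]^{-1}-\mathbb I$.

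The remaining and genuinely technical point is regularity. Because $r\in C^1$, the kernel $C^T(t,s)=\tfrac12\int_{|t-s|}^{2T-t-s}r(\eta)\,d\eta$ is $C^2$ on each of the triangles $t>s$ and $t<s$ and continuous across the diagonal, but carries a corner there coming from $|t-s|$; hence it is piecewise $C^2$. I would then show that $[C^\xi]^{-1}=\mathbb I+N^\xi$ has a kernel of the same type: for fixed second argument $\sigma$, $N^\xi(\cdot,\sigma)$ solves $(\mathbb I+K^\xi)N^\xi(\cdot,\sigma)=-C^\xi(\cdot,\sigma)$, so $N^\xi(\cdot,\sigma)=-C^\xi(\cdot,\sigma)-K^\xi N^\xi(\cdot,\sigma)$ exhibits $N^\xi$ as a piecewise-$C^2$ datum plus a smoothing correction, and a short bootstrap upgrades $N^\xi$ to piecewise $C^2$. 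Finally, in the composition defining $m^\xi$ the factor $C^T(u,s)$ is evaluated at $u\in[T-\xi,T]$, $s\in[0,T-\xi]$, i.e. off the diagonal except at the single corner $u=s=T-\xi$, so it is smooth there, and convolving it with the piecewise-$C^2$ resolvent kernel keeps $m^\xi$ piecewise $C^2$. I expect this resolvent-regularity and diagonal-corner bookkeeping, rather than the projector algebra, to be the main obstacle: it is exactly the invertibility of $C^\xi$ assumed in Theorem \ref{Th 1} that guarantees $N^\xi$ is an honest (piecewise-smooth) integral kernel and not an unbounded or distributional object.
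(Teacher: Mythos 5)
Your proposal is correct, and at bottom it rests on the same two ingredients as the paper's own (sketched) proof: the Fredholm resolvent $[C^\xi]^{-1}=\mathbb I+N^\xi$ with a kernel of the same piecewise-$C^2$ smoothness as (\ref{ker_connecting}), and its composition with the kernel of $C^T$. Indeed, after the substitution $u=\eta+(T-\xi)$ your formula $m^\xi(t,s)=C^T(t,s)+\int_{T-\xi}^{T}\widetilde N^\xi(t,u)\,C^T(u,s)\,du$ is exactly the paper's (\ref{kernel m}) with $N^\xi=-l^\xi$. The difference is the route to it: the paper substitutes (\ref{int_connecting xi+}) (with $\xi=T$) and (\ref{C^xi inverse}) directly into the definition (\ref{repres mathcal P xi+}) and then does the ``cumbersome calculations'' needed to see that all contributions of $f|_{[T-\xi,T]}$ cancel, leaving an integral over $[0,T-\xi]$ only; you instead use the skew-projector characterization (Proposition \ref{Prop CP=PflatC}) to write $q:=\mathcal P^{T,\xi}f-X^{T,\xi}f$ as the solution of a second-kind equation on $[T-\xi,T]$ whose right-hand side $F(t)=\int_0^{T-\xi}C^T(t,v)f(v)\,dv$ manifestly involves only $f|_{[0,T-\xi]}$, and identify the operator on the left as the shift-conjugate of $C^\xi$ via (\ref{C^xi via C^T}). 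This makes the cancellation automatic rather than a computation, and your bootstrap for the piecewise-$C^2$ regularity of $N^\xi$ proves what the paper merely cites as ``general Fredholm integral equation theory.'' Both buy the same conclusion; yours trades the paper's brute-force kernel algebra for a small amount of projector geometry.

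One point you should state explicitly: the support property that actually follows from $\mathcal P^{T,\xi}f\in\mathscr F^{T,\xi}$ (and from your kernel formula) is $m^\xi(t,s)=0$ for $t<T-\xi$, i.e. $m^\xi$ lives on $\{t\geqslant T-\xi\}$. The Lemma as printed ($m^\xi(t,s)|_{t>T-\xi}\equiv 0$), and the remark after its proof, have the inequality reversed; this is a typo in the paper, as is seen from (\ref{kernel m}) itself (whose leading term $X^{T,\xi}C^T(t,s)$ is supported on $t\geqslant T-\xi$) and from the subsequent definition (\ref{W_T!}), which evaluates $\mathcal P^{T,x}f$ at $t=T-x+0$ and would produce the trivial operator $f\mapsto f(T-\cdot)$ if $m^\xi$ vanished for $t>T-\xi$. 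So your derivation proves the corrected statement, which is the one the rest of the paper actually uses.
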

    \begin{proof}(sketch)\,\,\,
By the general Fredholm integral equation theory, the inverse to
the isomorphism $C^\xi$ is takes the form
        \begin{equation}\label{C^xi inverse}
        ((C^\xi)^{-1}f)(t)=f(t)-\int_0^\xi l^\xi(t,s) f(s)\,ds\,,\qquad
        0\leqslant t\leqslant \xi
        \end{equation}
with a matrix kernel $l^\xi$ of the same smoothness as the kernel
$C^\xi$: it is continuous in square $[0,\xi]\times[0,\xi]$ twice
continuously differentiable outside the diagonal $t=s$.

Substituting (\ref{int_connecting xi+}) (with $\xi=T$) and
(\ref{C^xi inverse}) to the right of the definition (\ref{repres
mathcal P xi+}), as a result of cumbersome calculations
(integration by parts, changing the order of integration, etc.),
we arrive at (\ref{P_int repres}). Also, the calculations provide
 \begin{equation}\label{kernel m}
            m^\xi(t,s)=X^{T,\xi}C^T(t,s)-\int_0^\xi
            l^\xi(t-(T-\xi),\eta)\,C^T(\eta+(T-\xi),s)\,d\eta\,,
 \end{equation}
where $C^T(\cdot,\cdot)$ is the kernel (\ref{ker_connecting}) for
$\xi=T$.
\end{proof}

Note in addition that the integration limits in (\ref{P_int
repres}) correspond to how the projector $\mathcal P^{T,\xi}$
acts. If $f\in{\mathscr F}^{T,\xi}$, then $f|_{0\leqslant
t\leqslant T-\xi}=0$; therefore, the integral vanishes, which
provides $\mathcal P^{T,\xi} f={X^{T,\xi}} f=f$. Also, since
$\mathcal P^{T,\xi} f|_{0\leqslant t\leqslant T-\xi}=0$ for any
$f$, the kernel $m^\xi$ must vanish identically for $t>T-\xi$.
\smallskip

By the use of (\ref{repres mathcal P flat+}), quite analogous
arguments lead to the representation
     \begin{equation*}
    (\mathcal{P}^{T,\xi}_{{\flat}}
    f)(t)=(X^{T,\xi}f)(t)+\int_0^{T-\xi}
    m^\xi_\flat(t,s)f(s)\,ds,\qquad
    0\leqslant t\leqslant T
    \end{equation*}
with kernel $m^\xi_\flat$ having the same properties as $m^\xi$.

\subsubsection*{Operators $W^T$ and $W^T_{{\flat}}$}

\noindent$\bullet$\,\,\,The next definition is motivated by the
amplitude formula (\ref{AF}). Let us recall that
    $\mathscr H^T=L_2(\Omega^T;\mathbb R^N)$ and introduce the operator
    $W^T:\mathscr{F}^T\to \mathscr H^T$,
    \begin{equation}\label{W_T!}
    (W^Tf)(x):=(\mathcal{P}^{T,x} f)(T-x+0),\qquad
    x\in\Omega^T.
    \end{equation}
Note that, at the moment, it is just an operator defined (in
several steps) by the given function $r|_{[0,2T]}$. However, later
on, it will turn out to be the control operator of some system
$\alpha^T$. The construction of this system is the main storyline
of the proof of sufficiency.
\smallskip

To obtain a representation of $W^T$ it suffices to put $t=T-x+0$
in (\ref{P_int repres}) and take into account the form of the
kernel $m^\xi$ in (\ref{kernel m}). As a result of simple
calculations, one gets
    \begin{equation}\label{W_T_int}
    (W^Tf)(x)=f(T-x)+\int_x^Tw(x,s)f(T-s)\,ds,\qquad x\in\Omega^T
    \end{equation}
with $C^2$-smooth matrix kernel of the form
    \begin{align}
    \notag &  w(x,s) \,=\,C^T(T-x,T-s)\,-\\
    \label{W_T_int_ker} &  \,-\int_0^x
    l^x(0,\eta)\,C^T(\eta+(T-x),T-s)\, d\eta\,,\qquad 0\leqslant x\leqslant s\leqslant
    T\,,
    \end{align}
where $l^x$ is taken from (\ref{C^xi inverse}). Putting $x=0$ in
(\ref{W_T_int_ker}) and taking into account
$C^T(T,T)\overset{\rm see\,\,(\ref{ker_connecting})}=0$, we get
    \begin{equation*}
    w(0,s)\,=\,0,\qquad  0\leqslant s\leqslant T\,.
    \end{equation*}
It is recommended to compare (\ref{W_T_int}) with
(\ref{int_control}).
\smallskip

As it easily follows from (\ref{W_T_int}), $W^T$ is an isomorphism
from $\mathscr F^T$ on $\mathscr H^T$ and, moreover, the relation
 \begin{equation}\label{W^T FTxi=HTxi}
    W^T{\mathscr F}^{T,\xi}\,=\,\mathscr H^\xi\,,\qquad 0\leqslant \xi\leqslant T
    \end{equation}
holds. Also, by standard arguments of the theory of 2nd-order
Volterra integral equations, for $y\in\mathscr H^T$ one has
   \begin{equation}\label{W_T-1}
    ([W^T]^{-1}y)(t)=y(T-t)-\int_0^tw^{-1}(t,s)\,y(T-s)\,ds,\qquad 0\leqslant t\leqslant T
    \end{equation}
with kernel $w^{-1}$, twice continuously differentiable for $0
\leqslant s < t \leqslant T$ obeying
    \begin{equation*}
    w^{-1}(0,s)\,=\,0,\qquad  0\leqslant s\leqslant T\,.
    \end{equation*}
\smallskip

\noindent$\bullet$\,\,\,Quite analogously,  the operator
$W^T_\flat:\mathscr{F}^T\to \mathscr H^T$,
    \begin{equation*}
    (W^T_\flat)(x):=(\mathcal{P}^{T,x}_\flat f)(T-x+0),\qquad
    x\in\Omega^T.
    \end{equation*}
possesses the same properties as $W^T$. Namely, the
representation:
    \begin{equation}\label{W_T_int flat}
    (W^T_\flat)(x)=f(T-x)+\int_x^T w_\flat(x,s)f(T-s)\,ds,\qquad x\in\Omega^T
    \end{equation}
holds with a kernel $w_\flat$, which is of the same smoothness as
$w$ and obeys $ w_\flat(0,s)\,=\,0,\quad  0\leqslant s\leqslant
T$. It is an isomorphism, which provides
 \begin{equation}\label{W flat F txi=H xi}
W^T_\flat{\mathscr F}^{T,\xi}=\mathscr H^\xi,\quad 0\leqslant
\xi\leqslant T\,.
 \end{equation}
Its inverse has the form
 \begin{equation*}
    ([W^T_\flat]^{-1}y)(t)=y(T-t)-\int_0^tw^{-1}_\flat(t,s)\,y(T-s)\,ds,\qquad 0\leqslant t\leqslant T
    \end{equation*}
with a kernel $w^{-1}_\flat$ obeying $w^{-1}_\flat(0,s)=0,\quad
0\leqslant s\leqslant T$.
\smallskip

\noindent$\bullet$\,\,\, Let $Y^{\xi}$ be (orthogonal) projector
in $\mathscr H^T$ on $\mathscr H^\xi$, which cuts off $\mathbb
R^N$-valued functions onto the segment $\Omega^\xi$:
    \begin{equation*}
    Y^\xi y=
    \begin{cases}
    y, & \text{in}\,\,\Omega^\xi\\
    0,    & \text{in}\,\,\Omega^T\setminus\Omega^\xi
    \end{cases}\,\,.
    \end{equation*}
    \begin{Lemma}\label{lemma 3}
For every $0<\xi\leqslant T$ the relation
        \begin{equation}\label{plexus}
        W^T\mathcal{P^\xi}=Y^\xi W^T; \qquad
        W^T_{{\flat}}\mathcal{P^\xi}_\flat=Y^\xi W^T_{{\flat}}.
        \end{equation}
holds.
    \end{Lemma}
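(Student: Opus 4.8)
The plan is to verify each identity in \eqref{plexus} pointwise in $x$, reading off both sides from the defining formula \eqref{W_T!}. Applying $W^T$ to $g=\mathcal P^{T,\xi}f$ gives $(W^T\mathcal P^{T,\xi}f)(x)=(\mathcal P^{T,x}\mathcal P^{T,\xi}f)(T-x+0)$, so the whole matter reduces to understanding the composition $\mathcal P^{T,x}\mathcal P^{T,\xi}$ of two of the (skew) projectors described in Proposition \ref{Prop CP=PflatC}. All of these are well defined for every $x\in(0,T]$ precisely because the hypothesis of Theorem \ref{Th 1} makes each $C^x$ an isomorphism, and they all project onto the increasing family $\mathscr F^{T,x}$ \emph{in parallel} to the common decreasing family $[C^T]^{-1}\mathscr F^{T,x}_\bot$. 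I would split the argument at $x=\xi$, since that is exactly where the cut-off $Y^\xi$ changes its behaviour.

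For $x\le\xi$ I would use the kernel monotonicity: $\ker\mathcal P^{T,\xi}=[C^T]^{-1}\mathscr F^{T,\xi}_\bot\subseteq[C^T]^{-1}\mathscr F^{T,x}_\bot=\ker\mathcal P^{T,x}$, since $\mathscr F^{T,\xi}_\bot\subseteq\mathscr F^{T,x}_\bot$. For idempotents the inclusion $\ker\mathcal P^{T,\xi}\subseteq\ker\mathcal P^{T,x}$ is equivalent to $\mathcal P^{T,x}\mathcal P^{T,\xi}=\mathcal P^{T,x}$ (because $\mathrm{range}(\mathcal P^{T,\xi}-I)=\ker\mathcal P^{T,\xi}$). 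Hence $(W^T\mathcal P^{T,\xi}f)(x)=(\mathcal P^{T,x}f)(T-x+0)=(W^Tf)(x)=(Y^\xi W^Tf)(x)$, the last step holding because $x\le\xi$ so $Y^\xi$ acts as the identity there.

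For $x>\xi$ I would instead use range monotonicity $\mathscr F^{T,\xi}\subseteq\mathscr F^{T,x}$. Since $\mathcal P^{T,\xi}f\in\mathscr F^{T,\xi}$ lies in the fixed-point set $\mathscr F^{T,x}$ of $\mathcal P^{T,x}$, one gets $\mathcal P^{T,x}\mathcal P^{T,\xi}f=\mathcal P^{T,\xi}f$. Evaluating this at $t=T-x+0$ yields $0$, because $\mathcal P^{T,\xi}f$ is supported in $[T-\xi,T]$ while $T-x<T-\xi$; this agrees with $(Y^\xi W^Tf)(x)=0$ for $x>\xi$. Combining the two ranges of $x$ gives the first identity in \eqref{plexus}. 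The second one follows verbatim, with $\mathcal P^{T,x},\,C^T$ replaced by $\mathcal P^{T,x}_\flat,\,(C^T)^*$, using \eqref{repres mathcal P flat+} and the description of $\mathcal P^{T,\xi}_\flat$ in Proposition \ref{Prop CP=PflatC}.

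The step I expect to require the most care is the passage from operator equalities in $\mathscr F^T$ to the one-sided pointwise evaluations at $t=T-x+0$ demanded by \eqref{W_T!}: a priori such right-hand limits need not commute with the manipulations above. I would legitimize them by invoking Lemma \ref{L repres cal P^xi integral}, according to which each $\mathcal P^{T,x}g$ equals $X^{T,x}g$ plus an integral term with piecewise $C^2$ kernel; this guarantees that the value $(\,\cdot\,)(T-x+0)$ is well defined and that the operator identities may be evaluated there unambiguously. Everything else is routine bookkeeping with the supports of the nested families $\mathscr F^{T,\xi}$ and $\mathscr H^\xi$.
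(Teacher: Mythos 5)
Your proof is correct and follows essentially the same route as the paper's: evaluate $(W^T\mathcal P^{T,\xi}f)(x)$ via the definition (\ref{W_T!}), split at $x=\xi$, and combine the composition identities $\mathcal P^{T,x}\mathcal P^{T,\xi}=\mathcal P^{T,x}$ (for $x<\xi$) and $\mathcal P^{T,x}\mathcal P^{T,\xi}=\mathcal P^{T,\xi}$ (for $x>\xi$) with the fact that ${\rm supp\,}\mathcal P^{T,\xi}f\subset[T-\xi,T]$. The only difference is that you justify these composition identities explicitly from the range/kernel descriptions of the skew projectors in Proposition \ref{Prop CP=PflatC}, whereas the paper merely cites monotonicity of the family $\mathcal P^{T,\xi}$ --- a worthwhile detail, but not a different argument.
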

    \begin{proof} We use two facts:

1) since ${\mathscr F}^{T, \xi}\subset{\mathscr F}^{T, \xi'}$ for
$\xi<\xi'$, the projectors satisfy ${\mathcal P}^{T,
\xi}<{\mathcal P}^{T, \xi'}$, which implies
        $$
        \mathcal P^{T,x}\,\mathcal P^{T,\xi}\,=\,
        \begin{cases}
        \mathcal P^{T,x} & \text{for}\,\,\,x<\xi\\
        \mathcal P^{T,\xi} & \text{for}\,\,\,x>\xi
        \end{cases}\,;
        $$

2) if $f\in\mathscr F^T$, then $\mathcal P^{T,\xi} f\in{\mathscr
F}^{T,\xi}$; hence  ${\rm supp\,}\mathcal P^{T,\xi}
f\subset[T-\xi,T]$ and, therefore, $(\mathcal P^{T,\xi}
f)(T-x-0)=0$ for $x>\xi$.
        \smallskip

As a consequence, according to the definition (\ref{W_T!}), we
have
        \begin{align*}
        & \left(W^T\mathcal P^{T,\xi} f\right)(x)\,=\\
        & =\,    \begin{cases}
        (\mathcal P^{T,x} \mathcal P^{T,\xi} f)(T-x-0)=(\mathcal P^{T,x} f)(T-x-0)=\left(W^T f\right)(x), & x<\xi\\
        (\mathcal P^{T,x} \mathcal P^{T,\xi} f)(T-x-0)=(\mathcal P^{T,\xi}
        f)(T-x-0)=0,
        & x>\xi
        \end{cases}\,\,=\\
        & =\,(Y^\xi W^T f)(x)\,.
        \end{align*}

The second equality in (\ref{plexus}) is proved in the same way.
    \end{proof}
    \smallskip

\noindent$\bullet$\,\,\,Here a relation, which connects the
operators $W^T,\,W^T_\flat$ and $C^T$, is established. In its
form, it duplicates the definition (\ref{connecting}). However, at
the moment, $W^T$ and $W^T_\flat$ are just some operators
constructed via the function $r$ and we do not claim that $C^T$ is
the connecting operator of some system $\alpha^T$. This remains to
be proved.
    \begin{Lemma}\label{lemma C=WW}
        The relation
        \begin{equation}\label{C_and_W}
        C^T\,=\,(W^T_{{\flat}})^* W^T\,
        \end{equation}
        holds.
    \end{Lemma}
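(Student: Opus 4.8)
The plan is to set $B:=(W^T_\flat)^* W^T$ and prove $B=C^T$. First I would record that both operators are of the form ``identity plus integral operator''. Writing $(Jf)(t):=f(T-t)$ for the reflection (an isometric involution, $J^*=J$, $J^2=\mathbb I$), the representations (\ref{W_T_int}) and (\ref{W_T_int flat}) read $W^T=(\mathbb I+G)J$ and $W^T_\flat=(\mathbb I+G_\flat)J$ with Volterra integral operators $G,G_\flat$ whose kernels $w,w_\flat$ are continuous. Hence $B=J(\mathbb I+G_\flat^*)(\mathbb I+G)J=\mathbb I+(\text{integral operator with continuous kernel})$, exactly like $C^T$ in (\ref{int_connecting xi+}). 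Consequently $E:=[C^T]^{-1}B$ is again of the form $\mathbb I+(\text{continuous-kernel integral operator})$, and the whole problem becomes to show $E=\mathbb I$.

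The key algebraic step is that $B$ obeys the same intertwining relation with the skew projectors as $C^T$. Indeed, Lemma \ref{lemma 3} gives $W^T\mathcal P^{T,\xi}=Y^\xi W^T$ and $W^T_\flat\mathcal P^{T,\xi}_\flat=Y^\xi W^T_\flat$; since $Y^\xi$ is an orthogonal projector ($(Y^\xi)^*=Y^\xi$), the adjoint of the second relation is $(\mathcal P^{T,\xi}_\flat)^*(W^T_\flat)^*=(W^T_\flat)^*Y^\xi$. Combining these,
$$ B\,\mathcal P^{T,\xi}=(W^T_\flat)^*Y^\xi W^T=(\mathcal P^{T,\xi}_\flat)^*\,B . $$
Comparing with the first identity in (\ref{CP=PflatC}), $C^T\mathcal P^{T,\xi}=(\mathcal P^{T,\xi}_\flat)^*C^T$, and inserting the second identity in (\ref{CP=PflatC}), $\mathcal P^{T,\xi}[C^T]^{-1}=[C^T]^{-1}(\mathcal P^{T,\xi}_\flat)^*$, I obtain
$$ E\,\mathcal P^{T,\xi}=[C^T]^{-1}(\mathcal P^{T,\xi}_\flat)^*B=\mathcal P^{T,\xi}[C^T]^{-1}B=\mathcal P^{T,\xi}\,E , $$
so $E$ commutes with every $\mathcal P^{T,\xi}$, $0<\xi\leqslant T$.

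Next I would extract the triangular structure. Commuting with the projector $\mathcal P^{T,\xi}$ forces $E$ to preserve both its range $\mathscr F^{T,\xi}$ and its kernel $[C^T]^{-1}\mathscr F^{T,\xi}_\bot$. Since $\mathscr F^{T,\xi}=\{f:\ {\rm supp\,}f\subset[T-\xi,T]\}$ and this holds for all $\xi$, the continuous kernel of $E$ must vanish below the diagonal; that is, $E=\mathbb I+K_-$ with $K_-$ a strictly lower-triangular Volterra operator. Preservation of the complementary subspaces $[C^T]^{-1}\mathscr F^{T,\xi}_\bot$ says, equivalently, that the conjugate $C^TE[C^T]^{-1}=\mathbb I+K_+$ is upper-triangular, i.e. $C^TK_-=K_+C^T$.

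The main obstacle is the final identification $E=\mathbb I$, i.e. $K_-=0$. The intertwining alone does not force this: the relation $C^TK_-=K_+C^T$ with $K_-$ lower- and $K_+$ upper-triangular is a genuine (Gelfand--Levitan type) factorization constraint that admits nonzero solutions for a generic $C^T$. What singles out $E=\mathbb I$ is the precise normalization built into the construction of $W^T$ and $W^T_\flat$ --- the leading reflection term $f(T-x)$ (coefficient exactly $\mathbb I$) together with $w(0,\cdot)=w_\flat(0,\cdot)=0$, which is nothing but the unit-amplitude jump relation (\ref{jump})/amplitude formula (\ref{AF}). I would therefore close the proof by comparing the diagonal of the kernel of $B=(W^T_\flat)^*W^T$, computed from the explicit forms (\ref{W_T_int})--(\ref{W_T_int_ker}), with that of $C^T$, and then feeding $B=C^TE$ together with the triangularity of $K_-$ to propagate the match off the diagonal; this diagonal/boundary computation is the delicate, though routine, part. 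Alternatively, one may bypass the abstract reduction entirely and verify $B=C^T$ by a direct kernel computation from (\ref{W_T_int})--(\ref{W_T_int_ker}), which is the more laborious route.
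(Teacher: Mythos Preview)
Your intertwining step is correct and is essentially equivalent to the paper's: your $E=[C^T]^{-1}B$ is conjugate via $W^T$ to the paper's $A:=W^T[C^T]^{-1}(W^T_\flat)^*$ (indeed $A=W^TE[W^T]^{-1}$), and by Lemma~\ref{lemma 3} your relation $E\mathcal P^{T,\xi}=\mathcal P^{T,\xi}E$ is exactly $AY^\xi=Y^\xi A$. The genuine gap is the closing argument. You correctly diagnose that commuting with the skew projectors in $\mathscr F^T$ only yields a two-sided triangularity constraint that does not by itself force $E=\mathbb I$, and then propose a ``diagonal comparison'' or a full kernel computation that you do not carry out; as stated this is a plan, not a proof.

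The paper avoids this difficulty entirely by moving to $\mathscr H^T$, where the projectors $Y^\xi$ are \emph{orthogonal}. Writing $A=\mathbb I+K$ with $K$ a compact integral operator (as follows from the explicit Volterra forms of $W^T$, $(W^T_\flat)^*$ and $[C^T]^{-1}$), the commutation $KY^\xi=Y^\xi K$ passes to adjoints since $(Y^\xi)^*=Y^\xi$, giving $K^*Y^\xi=Y^\xi K^*$ and hence $K^*K\,Y^\xi=Y^\xi K^*K$ for all $\xi$. A bounded self-adjoint operator commuting with every cut-off $Y^\xi$ must be multiplication by a bounded matrix function; since $K^*K$ is also compact, it is zero, whence $K=0$ and $A=\mathbb I$, i.e. $C^T=(W^T_\flat)^*W^T$. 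In short: the missing idea is to conjugate by $W^T$ so that the commutation is with an \emph{orthogonal} nest, after which compactness plus spectral theory closes the argument in one line, with no diagonal bookkeeping needed.
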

    \begin{proof}
        ${\bf 1.}$\,\,\,Let us denote $A:=W^T[C^T]^{-1}(W^T_\flat)^*$ and verify equality
        \begin{equation}\label{AY=YA}
        AY^\xi\,=\,Y^\xi A\,.
        \end{equation}

Multiplying the first equality in (\ref{plexus}) on the right by
$[C^T]^{-1}(W^T_\flat)^*$ we have
        \begin{equation}\label{111}
        W^T\mathcal P^\xi\,\,[C^T]^{-1}(W^T_\flat)^*=Y^\xi
        W^T\,\,[C^T]^{-1}(W^T_\flat)^*=Y^\xi A
        \,.
        \end{equation}
In the second equality (\ref{plexus}), passing to the adjoint
operators, one has
$(\mathcal{P^\xi}_\flat)^*(W^T_{{\flat}})^*=(W^T_{{\flat}})^*Y^\xi$.
Multiplying both parts on the left by $W^T[C^T]^{-1}$, we obtain:
 \begin{equation}\label{222}
        W^T[C^T]^{-1}(\mathcal{P^\xi}_\flat)^*(W^T_{{\flat}})^*=W^T[C^T]^{-1}(W^T_{{\flat}})^*Y^\xi=AY^\xi
        \,.
 \end{equation}
In this way, we get
$$
Y^\xi A\overset{(\ref{111})}=W^T\mathcal
P^\xi\,\,[C^T]^{-1}(W^T_\flat)^*\overset{(\ref{plexus})}=
W^T[C^T]^{-1}(\mathcal{P^\xi}_\flat)^*(W^T_{{\flat}})^*\overset{(\ref{222})}=AY^\xi\,.
$$
So (\ref{AY=YA}) is valid.
\smallskip

${\bf 2.}$\,\,\,Representations (\ref{C^xi inverse}) (with
$\xi=T$), (\ref{W_T_int}) and (\ref{W_T_int flat}) easily imply
that the operator $A=W^T\,\,[C^T]^{-1}(W^T_\flat)^*$ has the form
        \begin{equation*}
        A=\mathbb I+K\,,
        \end{equation*}
where $K$ is a compact integral operator in $\mathscr H^T$. The
commutation (\ref{AY=YA}) leads to $KY^\xi=Y^\xi K$ and, then, to
$K^*Y^\xi=Y^\xi K^*$. As a result, we have
 $$
K^*KY^\xi=K^*Y^\xi K=Y^\xi K^*K,
 $$
so that a {\it self-adjoint} operator $K^*K$ commutes with the
family of projectors $\{Y^\xi\}_{0\leqslant \xi\leqslant T}$. By
the well-known arguments of the spectral theory, the latter is
possible if and only if $K^*K$ is the multiplication by a bounded
positive measurable matrix-function. However, since $K^*K$ is {\it
compact}, this is possible if and only if $K^*K=\mathbb O$, which
is equivalent to $K=\mathbb O$. Thus, we arrive at $A=\mathbb I$.
\smallskip

${\bf 3.}$\,\,\, By the definition of $A$, we have
        $$
        A\,=\,W^T[C^T]^{-1}(W^T_\flat)^*\,=\,\mathbb I\,.
        $$
Since $W^T$ and $W^T_\flat$ are isomorphisms, the latter leads to
(\ref{C_and_W}).
    \end{proof}

\subsubsection*{Operator $L$}

Recall that $\mathscr M^T\subset \mathscr F^T$ is the class of
$C^2$-smooth controls vanishing near $t=0$. Also, note that the
set $\mathscr M^T\cap \mathscr F^{T,\xi}$ is dense in $\mathscr
F^{T,\xi}$ for all positive $\xi\leqslant T$.

Focusing on (\ref{Wd^2=LW}), let us define operator in $\mathscr
H^T$:
    \begin{equation}\label{def L}
    L:=-W^T\,\frac{d^2}{dt^2}\,[W^T]^{-1}\,,\quad {\rm Dom\,}
    L=W^T\mathscr M^T\,.
    \end{equation}
Using representation (\ref{W_T_int}) of $W^T$ and smoothness of
$w$, it is easy to make sure that ${\rm Dom\,} L=W^T\mathscr
M^T=\{y\in C^2(\Omega^T;\mathbb R^N)\,|\,\,{\rm
supp\,}y\subset[0,T)\}$.

The following result shows that $L$ is a Sturm-Liouville operator.
    \begin{Lemma}\label{lemma L Sturm-Liouville}
The representation
\begin{equation}\label{L Sturm-Liouv}
L\,=\,-\,\frac{d^2}{dx^2}\,+\,V(x)
\end{equation}
holds with potential $V(x):=-2\,\frac{dw(x,x)}{dx}\in
C^1(\Omega^T; \mathbb M^N)$, where $w$ is the kernel of the
integral part of $W^T$ in (\ref{W_T_int}).
    \end{Lemma}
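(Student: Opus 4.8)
The plan is to evaluate $L$ on a generic element $y=W^Tf$ of its domain, where $f\in\mathscr M^T$, and to show directly that $Ly=-y''+Vy$. Since $[W^T]^{-1}y=f$, the definition (\ref{def L}) gives $Ly=-W^T\frac{d^2f}{dt^2}$, so everything reduces to comparing the two functions $W^Tf''$ and $-y''+Vy$, both written out through the integral representation (\ref{W_T_int}) of $W^T$ with its kernel $w$.

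First I would differentiate $y(x)=f(T-x)+\int_x^T w(x,s)f(T-s)\,ds$ twice in $x$. Using $\frac{d^2}{dx^2}f(T-x)=f''(T-x)$ and the Leibniz rule for the variable lower limit, $y''$ acquires, besides the integral $\int_x^T w_{xx}(x,s)f(T-s)\,ds$, diagonal terms involving $w(x,x)$, $w_x(x,x)$ and the total diagonal derivative $\frac{d}{dx}w(x,x)=w_x(x,x)+w_s(x,x)$. In parallel I would substitute $f''$ into (\ref{W_T_int}) and integrate by parts twice in $s$, writing $f''(T-s)=\frac{d^2}{ds^2}f(T-s)$; because $f$ vanishes near $t=0$ (so $f(0)=f'(0)=0$), all boundary contributions at $s=T$ drop out, leaving $\int_x^T w_{ss}(x,s)f(T-s)\,ds$ together with diagonal terms in $w(x,x)$ and $w_s(x,x)$.

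Subtracting the two expressions, the terms $f''(T-x)$ and $w(x,x)f'(T-x)$ cancel, and the total coefficient of $f(T-x)$ collapses to $-2\frac{d}{dx}w(x,x)=V(x)$ by the definition $V(x)=-2\frac{dw(x,x)}{dx}$; this reproduces exactly the pointwise part $V(x)f(T-x)$ of $Vy$. Hence $Ly=-y''+Vy$ holds if and only if the two integral remainders agree, i.e. if and only if the kernel solves the wave (Goursat) equation $w_{xx}(x,s)-w_{ss}(x,s)=V(x)\,w(x,s)$ on $0\le x\le s\le T$, complemented by $w(0,s)=0$, which was already recorded after (\ref{W_T_int_ker}).

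The main obstacle is establishing this equation for the \emph{constructed} kernel $w$, since $w$ is given by the Volterra formula (\ref{W_T_int_ker}) in terms of $C^T$ and the resolvent kernel $l^x$ of (\ref{C^xi inverse}), and brute-force differentiation in $x$ is unpleasant because $l^x$ itself depends on $x$. I would instead exploit that the connecting kernel $C^T(t,s)$ of (\ref{ker_connecting}) satisfies the \emph{free} wave equation $C^T_{tt}-C^T_{ss}=0$ off the diagonal — a one-line computation from the single-integral formula — and propagate this through (\ref{W_T_int_ker}), so that the potential term $V(x)w$ arises precisely as the diagonal defect generated by the limits of integration. Equivalently, one may note that $V(x)=-2\frac{dw(x,x)}{dx}$ together with $w(0,s)=0$ forces the diagonal value $w(x,x)=-\frac12\int_0^x V(\eta)\,d\eta$, so $w$ is the unique $C^2$ solution of the Goursat problem (\ref{goursat_problem}); uniqueness then yields the representation (\ref{L Sturm-Liouv}), while the stated $C^1$-regularity of $V$ follows from the $C^2$-smoothness of $w$ asserted with (\ref{W_T_int}).
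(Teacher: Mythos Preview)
Your computation leading to the equation $Ly=-y''+Vy+\int_x^Tk(x,s)y(s)\,ds$ is exactly the paper's step~(\ref{calculus}), and you correctly identify that the whole lemma reduces to killing the integral remainder, equivalently to the Goursat identity $w_{xx}-w_{ss}=V(x)w$. The problem is that you do not actually establish this identity. Your first suggestion---propagating the free wave equation $C^T_{tt}-C^T_{ss}=0$ through the explicit formula (\ref{W_T_int_ker})---is only a hope, not an argument: the resolvent kernel $l^x$ depends on the parameter $x$ through the \emph{domain} of the operator $C^x$, so differentiating it twice in $x$ and tracking all the cross terms is precisely the ``unpleasant'' computation you are trying to avoid, and you give no mechanism by which the defect organizes itself into $V(x)w$. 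Your fallback via uniqueness of the Goursat problem is circular: knowing $w(0,s)=0$ and $w(x,x)=-\tfrac12\int_0^xV$ tells you only that $w$ has the boundary data of the Goursat solution; it does not tell you that $w$ satisfies the PDE. Uniqueness runs the other way (boundary data determine the solution \emph{given} the PDE), so this step proves nothing.

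The paper sidesteps the kernel computation entirely. It first proves the auxiliary identity $(C^T\ddot g,f)=(C^Tg,\ddot f)$ for controls vanishing at $t=T$, which uses only the explicit form (\ref{ker_connecting}) of the connecting kernel. Combined with the factorization $C^T=(W^T_\flat)^*W^T$ from Lemma~\ref{lemma C=WW} and the support relations (\ref{W^T FTxi=HTxi})--(\ref{W flat F txi=H xi}), this yields that $L$ is \emph{local}: it does not enlarge supports to the right (easy, from the Volterra structure) or to the left (the nontrivial direction, requiring the dual operator $W^T_\flat$). Locality then forces $k\equiv0$ in (\ref{L local}) by a trivial observation, without ever touching the PDE for $w$. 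The moral is that the Goursat equation for the constructed $w$ is a \emph{consequence} of the lemma, not an input to it; the genuine input is the dual construction and the factorization (\ref{C_and_W}), which your proposal does not use.
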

    \begin{proof}(sketch)

\noindent{$\bf 1$.}\,\,\, At first, an auxiliary relation is
derived. To simplify the notation, we use
$\dot{(...)}=\frac{d}{dt}$.

Let controls $f,g\in\mathscr M^T$ obey $f(T)=g(T)=0$. Then the
following equality holds:
        \begin{equation}\label{aux rel}
        (C^T\ddot{g},f)_{\mathscr F^T}\,=\,(C^Tg,\ddot{f})_{\mathscr F^T}\,.
        \end{equation}
It can be verified by integration by parts, with regard to the
boundary conditions imposed of the controls and a specific form
(\ref{ker_connecting}) (with $\xi=T$) of kernel of $C^T$.
        \smallskip

\noindent{$\bf 2$.}\,\,\, Show that the operator $L$ is {\it
local}, i.e. satisfies ${\rm supp\,}Ly\subset{\rm supp\,}y$.

Let $y\in{\rm Dom\,}L$ and ${\rm supp\,}y\subset\Omega^\xi$. Then,
due to (\ref{W^T FTxi=HTxi}), for $f:=[W^T]^{-1}y$ we have
$f\in\mathscr M^T\cap{\mathscr F}^{T,\xi}$ and, consequently,
$\ddot f\in{\mathscr F}^{T,\xi}$. Referring again to (\ref{W^T
FTxi=HTxi}), we get: $Ly=-W^T\ddot f\in\mathscr H^\xi$, which
means ${\rm supp\,}Ly\subset\Omega^\xi$. Thus, $ L $ does not
extend the support of functions to the {\it right}.

Let $y\in{\rm Dom\,}L$ and ${\rm
supp\,}y\subset\overline{\Omega^T\setminus\Omega^\xi}=[T-\xi,T]$.
Then for $f:=[W^T]^{-1}y\in\mathscr M^T$ we obtain $f(T)=y(0)=0$.
Let $g\in\mathscr M^T\cap{\mathscr F}^{T,\xi}$ and $g(T)=0$. Thus,
$f$ and $g$ satisfy the conditions, which provide (\ref{aux rel}).
Then we have:
        \begin{align*}
        & -(Ly,W^T_\flat g)_{\mathscr H^T}=-(LW^Tf,W^T_\flat g)_{\mathscr H^T}=(W^T\ddot
        f,W^T_\flat g)_{\mathscr H^T}\overset{(\ref{C_and_W})}=(C^T\ddot
        f,g)_{\mathscr F^T}\overset{(\ref{aux rel})}=\\
        & = (C^Tf,\ddot g)_{\mathscr F^T}= (y,W^T_\flat\ddot
        g)_{\mathscr H^T}=0\,,
        \end{align*}
because $\ddot g\in{\mathscr F}^{T,\xi}$, and therefore
$W^T_\flat\ddot g\in\mathscr H^\xi$, whereas $y\in\mathscr
H^T\ominus\mathscr H^\xi$ by assumption on its support. In the
meantime, the set $\{g\in{\mathscr F}^{T,\xi}\,|\,\,g(T)=0\}$ is
dense in ${\mathscr F}^{T,\xi}$. Owing to (\ref{W flat F txi=H
xi}), the images $W^T_\flat g$ constitute a dense set in $\mathscr
H^\xi$. Therefore, the established equality $(Ly,W^T_\flat
g)_{\mathscr H^T}=0$ implies $Ly\in\mathscr H^T\ominus\mathscr
H^\xi$. The latter is equivalent to ${\rm
supp\,}Ly\subset\overline{\Omega^T\setminus\Omega^\xi}$. As a
result, $L$ does not extend the support of functions  to the {\it
left}.

So, $L$ does not extend the support of functions, i.e., acts
locally.
\smallskip

\noindent{$\bf 3$.}\,\,\,Let us show that (\ref{L Sturm-Liouv})
does hold. By (\ref{W_T_int}), for $f\in\mathscr M^T$ one easily
derives
        \begin{align}
\notag  & (W^Tf)''(x)-(W^T\ddot
        f)(x)=\\
\notag  & =V(x)f(T-x)+\int_x^T\left[w_{xx}(x,s)-w_{ss}(x,s)\right]f(T-s)\,ds=\\
\label{calculus}        & =
        V(x)(W^Tf)(x)+\int_x^T\left[w_{xx}(x,s)-w_{ss}(x,s)-V(x)w(x,s)\right]f(T-s)\,ds
        \end{align}
where $(...)'=\frac{d}{dx}$ and $V:=-2\,\frac{dw(x,x)}{dx}$. Note
that all operations, which are applied in course of the derivation
(differentiation of integrals, integration by parts, etc) are
justified owing to $C^2$-smoothness of the kernels of the
integrals under consideration. Next, substituting $f=[W^T]^{-1}y$
and using (\ref{W_T-1}), the derived relation is transformed to
        \begin{equation}\label{L local}
        (Ly)(x)=-y''(x)+V(x)y(x)+\int_x^Tk(x,s)y(s)\,ds,\qquad
        x\in\Omega^T
        \end{equation}
with a continuous kernel $k$.
\smallskip

\noindent{$\bf 4$.}\,\,\,We omit a simple proof of the following
fact: an operator of the form (\ref{L local}) is {\it local} if
and only if the integral summand is absent. Thus, we arrive at
(\ref{L Sturm-Liouv}).
  \end{proof}

\subsubsection*{Operator $L_\flat$}

\noindent$\bullet$\,\,\,By the use of the same scheme, it is
established that the operator
    \begin{equation*}
    L_\flat:=-W^T_\flat\,\frac{d^2}{dt^2}\,[W^T_\flat]^{-1}\,,\quad
    {\rm Dom\,} L_\flat=W^T_\flat\mathscr M^T
    \end{equation*}
is of the form
    \begin{equation*}
    L_\flat\,=\,-\,\frac{d^2}{dx^2}\,+\,V_\flat(x)
    \end{equation*}
with potential $V_\flat(x):=-2\,\frac{dw_\flat(x,x)}{dx}\in
C^1(\Omega^T; \mathbb M^N)$, where $w_\flat$ is the kernel of the
integral part of $W^T_\flat$ in (\ref{W_T_int flat}).
    \smallskip

    \noindent$\bullet$\,\,\,
Let us show that operators $L$ and $L_\flat$ are adjoint by
d'Alembert, i.e., for $y,v\in C^\infty_0(\Omega^T;\mathbb R^N)$
the equality
    \begin{equation*}
    (Ly,v)_{\mathscr H^T}\,=\,(y,L_\flat v)_{\mathscr H^T}\,.
    \end{equation*}
is valid. Indeed, by the choice of $y$ and $v$, the controls
$f=[W^T]^{-1}y$ and $g=[W^T_\flat]^{-1}v$ vanish at $t=0$ and obey
$f(0)=g(0)=0$. Hence, we have:
    \begin{align*}
    & (Ly,v)_{\mathscr H^T}=(W^T\ddot f,W^T_\flat g)_{\mathscr H^T}=(C^T\ddot
    f,g)_{\mathscr F^T}\overset{(\ref{aux rel})}
    = (C^Tf,\ddot g)_{\mathscr F^T}=\\
    & =(W^T f,W^T_\flat \ddot g)_{\mathscr H^T}=(y,L_\flat v)_{\mathscr H^T}\,.
    \end{align*}
As a consequence, one easily concludes that the potentials are
connected by the equality
    \begin{equation*}
    V_\flat(x)=V^\flat(x),\qquad x\in\Omega^T\,.
    \end{equation*}

\subsubsection*{Completion of the proof of Theorem \ref{Th 1}}

The operator (\ref{L Sturm-Liouv}) determines a dynamical system
$\alpha^T$ of the form
    \begin{equation}\label{forward problem+}
    \begin{cases}
    u_{tt}+Lu=0, & x>0,\ 0<t<T\\
    u|_{t=0}=u_t|_{t=0}=0,& x\geqslant 0\\
    u_{x=0}=f, & 0\leqslant t\leqslant T.\\
    \end{cases}
    \end{equation}
As is seen from (\ref{def L}), the operator $W^T$ is the control
operator of this system.

Quite analogously, system $\alpha^T_\flat$ of the form
    \begin{equation*}
    \begin{cases}
    u_{tt}+L_\flat u=0, & x>0,\ 0<t<T\\
    u|_{t=0}=u_t|_{t=0}=0,& x\geqslant 0\\
    u_{x=0}=f, & 0\leqslant t\leqslant T\\
    \end{cases}
    \end{equation*}
is controlled by the operator $W^T_\flat$.
\smallskip

As it follows from (\ref{C_and_W}), the connecting operator of the
system (\ref{forward problem+}) defined by (\ref{connecting}) {\it
coincides} with the operator $C^T$ introduced by
(\ref{int_connecting xi+}) (for $\xi=T$). Therefore, the integral
parts of these operators also coincide. The latter obviously
implies that the response matrix-function of the system
(\ref{forward problem+}) is identical to the function
$r|_{0\leqslant t\leqslant 2T}$, with which our considerations
have started.

Thus, $r|_{0\leqslant t\leqslant 2T}$ is the response function of
a dynamical system of the form (\ref{forward problem}). {\it The
sufficiency of the conditions of Theorem \ref{Th 1} is proved.}

\subsubsection*{Comments}

\noindent$\bullet$\,\,\, The deep connection between inverse
problems and the problem of triangular factorization of operators
is well known. It can be also traced in this work.

Recall the definitions. Let a monotone family ({\it nest}) of
subspaces $\mathfrak f=\{\mathscr F^\xi\}_{0\leqslant \xi\leqslant
T}:$ $\mathscr F^\xi\subset\mathscr F^{\xi'}$ for $\xi<\xi'$ be
given in a Hilbert space $\mathscr F$. An operator $Z$ is called
{\it triangular} with respect to this nest if $Z\mathscr
F^\xi\subset\mathscr F^\xi$, i.e. all the subspaces $\mathscr
F^\xi$ are invariant with respect to $Z$. We say that operators
$Z$ and $Z_\flat$, which are triangular with respect to $\mathfrak
f$, provide {\it triangular factorization} of an operator $C$ if
$C=Z_\flat^*Z$ holds.

In our paper, in the space $\mathscr F^T$, there is the nest of
the subspaces $\mathfrak f=\{{\mathscr F}^{T,\xi}\}_{0\leqslant
\xi\leqslant T}$. Let us introduce an isometry
    $$
I^T: \mathscr H^T\to\mathscr F^T, \quad (I^Ty)(t):=y(T-t), \qquad
0\leqslant t\leqslant T
    $$
and note that $(I^T)^*I^T=\mathbb I_{\mathscr F^T}$. Following
from (\ref{controllability}) and (\ref{controllability flat}),
operators $Z^T:=I^TW^T$ and $Z^T_\flat:=I^TW^T_\flat$ are
triangular with respect to $\mathfrak f$. According to
(\ref{connecting}), we have:
   \begin{equation}\label{C^T factor}
C^T=(W^T_\flat)^*W^T=(I^TW^T_\flat)^*I^TW^T\,=\,(Z^T_\flat)^*Z^T\,.
   \end{equation}
Consequently, the pair $Z^T,\,Z^T_\flat$ provides  triangular
factorization of the connecting operator of the system $\alpha^T$
with respect to the nest of subspaces, formed by delayed controls.

Thus, solving the inverse problem by the procedure ${\bf
A.}\!-\!{\bf C.}$ described in the end of the section \ref{Sec
Inverse problem}, we solve the triangular factorization problem
for the operator $C^T$ by (\ref{C^T factor}). These problems are
{\it equivalent}.

The general factorization problem for operators of the form
$\mathbb I+\text{compact}$\, is solved in \cite{GohbKrein}: see
Theorem 2.1, which provides the necessary and sufficient
conditions for its solvability. The conditions on the family of
operators $C^\xi$ adopted in our Theorem \ref{Th 1}, are quite
adequate to the mentioned classical ones.

Let us return to the question raised in the Introduction: why does
the operator $L$ given by (\ref{def L}) turn out to be {\it
local}? The explanation is in a very specific form of the kernel
of operator $C^T$: see (\ref{ker_connecting}). Such a specifics is
used, in particular, in the calculations (\ref{aux rel}) and
(\ref{calculus}).
\smallskip

\noindent$\bullet$\,\,\, A substantial difference, which
distinguishes the problem with a non-self-adjoint potential $V$
from the problem with $V^*=V$, is as follows. In the second case,
the connecting operator $C^T=(W^T)^*W^T$ is {\it positive
definite} and, hence, all the shortened operators $C^\xi$ turn out
to be such. Therefore, for characterization it suffices to require
only $C^T$ to be isomorphism: this implies isomorphism of all
$C^\xi$. In the general case, isomorphism of $C^T$ does not ensure
isomorphism of $C^\xi$. This is the mistake made in the statement
of the conditions of Theorem 3.2 in \cite{Avd Bel 96}.

There is a case when the isomorphism of all $C^\xi$ certainly
takes place. If $T>0$ is small, then the integral parts of the
operators $C^\xi$ have a small norm and all $C^\xi$ turn out to be
isomorphisms. With this reservation, the statement of Theorem 3.2
becomes true.
\smallskip

\noindent$\bullet$\,\,\,The scheme, which provides the data
characterization in this work, is traditional for the BC-method.
Its core is, first, to elaborate an efficient procedure, which
{\it solves} the inverse problem and, then, to provide the
conditions, which ensure its realizability. In one-dimensional
problems such a scheme works quite successfully: see \cite{B Pest,
BMikh Dirac IP}. There are certain results on the multidimensional
case but the list of the characteristic conditions turns out to be
rather long \cite{BV Rendiconti}.

\end{document}